\documentclass{ifacconf}   
\let\theoremstyle\relax
\usepackage{amsmath,amsfonts,amssymb,amsthm}
\usepackage{bm}
\usepackage{natbib}       
\usepackage{xcolor}
\usepackage{algpseudocode}
\usepackage{algorithm}
\usepackage[short]{optidef}
\usepackage{appendix}
\usepackage{tikz}
\usepackage{graphicx}
\usetikzlibrary{arrows.meta, positioning, decorations.pathmorphing}
\theoremstyle{remark}
\newtheorem{definition}{Definition}
\AtBeginEnvironment{definition}{%
  \pushQED{\qed}%
}
\AtEndEnvironment{definition}{\popQED\endexample}
\newtheorem{proposition}{Proposition}
\newtheorem{example}{Example}

\newtheorem{remark}{Remark}

\newtheorem{theorem}{Theorem}
\newtheorem{assumption}{Assumption}
\newtheorem{corollary}{Corollary}
\makeatletter
\newcommand\fs@betterruled{%
  \def\@fs@cfont{\bfseries}\let\@fs@capt\floatc@ruled
  \def\@fs@pre{\vspace*{9pt}\hrule height.4pt depth0pt \kern 1.8pt}%
  \def\@fs@post{\kern1pt\hrule\kern-3pt}%
  \def\@fs@mid{\kern0.6pt\hrule\kern -3pt}%
  \let\@fs@iftopcapt\iftrue}
\floatstyle{betterruled}
\restylefloat{algorithm}
\makeatother

\DeclareMathAlphabet{\mymathbbb}{U}{BOONDOX-ds}{m}{n}

\newcommand{\mr}[1]{\mathrm{#1}}
\newcommand{\rrr}{\mathbb{R}}

\newcommand{\zzz}{\mathbb{Z}}

\newcommand{\At}{A_{\mathrm{t}}}
\newcommand{\Bt}{B_{\mathrm{t}}}
\newcommand{\Aq}{A_{\mathrm{q}}}
\newcommand{\Bq}{B_{\mathrm{q}}}
\newcommand{\xt}{x_{\mathrm{t}}}
\newcommand{\xq}{x_{\mathrm{q}}}
\newcommand{\xte}{x_{\mathrm{te}}}
\newcommand{\xqe}{x_{\mathrm{qe}}}
\newcommand{\yte}{y_{\mathrm{te}}}
\newcommand{\yqe}{y_{\mathrm{qe}}}
\newcommand{\zte}{z_{\mathrm{te}}}
\newcommand{\zqe}{z  _{\mathrm{qe}}}

\newcommand{\eps}{\varepsilon}

\newcommand{\one}{\mathbf{1}}
\newcommand{\zero}{\mathbf{0}}

\newcommand{\Reps}{\rrr_\varepsilon}
\newcommand{\Rtop}{\rrr_\top}
\newcommand{\matr}[2]{{ \left[ \begin{array}{#1} #2 \end{array} \right] }}

\newcommand{\R}{\mathbb{R}}
\newcommand{\cR}{\mathcal{R}}

\newcommand{\Tx}{\Tilde{x}}
\newcommand{\TA}{\Tilde{A}}
\newcommand{\TB}{\Tilde{B}}
\newcommand{\pnorm}[1]{||#1||_\mathbb{P}}

\begin{document}
\begin{frontmatter}

\title{Dynamics of Implicit Time-Invariant Max-Min-Plus-Scaling Discrete-Event Systems} 

\author[First]{Sreeshma Markkassery} 
\author[First]{Ton van den Boom} 
\author[First]{Bart De Schutter}

\address[First]{Delft Center for Systems and Control (DCSC), Delft University of Technology, The Netherlands
  (e-mail: \{s.markkassery, A.J.J.vandenBoom, B.DeSchutter\}@tudelft.nl)}

\begin{abstract}  
Max-min-plus-scaling (MMPS) systems generalize max-plus, min-plus and max-min-plus models with more flexibility in modelling discrete-event dynamics. Especially, implicit MMPS models capture a wide range of real world discrete-event applications. This article analyzes the dynamics of an autonomous, time-invariant implicit MMPS system in a discrete-event framework. First, we provide sufficient conditions under which an implicit MMPS system admits at least one solution to its state-space representation. Then, we analyze its global behavior by determining the key parameters; the growth rates and fixed points. For a solvable MMPS system, we assess the local behavior of the system around its set of fixed points via a normalization procedure.  Further, we present the notion of stability for the normalized system. A case study of the urban railway network substantiates the theoretical results. 
\end{abstract}

\begin{keyword}
Discrete-event systems, implicit models, max-min-plus-scaling systems
\end{keyword}

\end{frontmatter}
\section{Introduction}
Max-plus algebraic models are widely used to capture the timing behavior of discrete-event systems (DES) \citep{cassandras1995introduction}. Over time, several extensions to these models have been developed, including min-plus algebraic systems, max-min-plus systems, switching max-plus linear systems, and other related models \citep{baccelli1992synchronization, mpw,van2006modelling}. Max-min-plus-scaling (MMPS) systems are the recent extension within this framework \citep{mmps} that employ operations such as maximization, minimization, addition, and scaling to enable more flexible modelling of discrete-event dynamics. 

In parallel, applications with discrete-event dynamics are also modelled as conventional nonlinear systems. For example, nonlinear discrete-time models for passenger-oriented control and scheduling in urban rail systems have been explored in \citep{liu2022modeling, passenger, wang2017train}. Analyzing the steady-state behavior of such models usually gives rise to non-smooth, non-convex optimization problems. Also, there are no general modelling criteria in conventional algebra that can be used for discrete-event applications. In contrast, max-plus algebra provides a mathematical framework for modelling discrete-event systems. Discrete-event systems involving only synchronization can be modelled using max and plus operations \citep{baccelli1992synchronization, mpw}. For instance, metro traffic flow can be modeled using max-plus systems, as shown in \cite{mplrailway}, though this model neglects passenger demand. Modelling competitive interactions requires a min operation that can be handled by max-min-plus systems \citep{gunawardena1994timing, olsder1994structural}. In cases like an urban railway line, passenger flow has a scalar dependency on the arrival and departure times of the train, which can be modelled using a scaling operation. By combining synchronization, competition, and scaling, MMPS systems enable more realistic representation of complex discrete-event systems. However, their nonlinearity falls outside classical max-plus theory, requiring a new mathematical framework to analyze their stationary behavior and stability.

In practice, discrete-event dynamic systems modelled using max-plus algebra (or min-plus algebra) are often implicit, i.e., the states `$x(k)$' in the same event cycle `$k$' depend on each other.  Such implicit dependencies are seen in examples like traffic networks, railway lines, manufacturing systems, etc \citep{baccelli1992synchronization, modelmatching,alirezaei2012max, komenda2018max}. 

The linear implicit max-plus algebraic models can be written into an explicit form using the Kleene star product \citep{mpw}. However, for nonlinear models in max-plus algebra (and min-plus algebra) like max-min-plus and MMPS systems, it is far more difficult to obtain an explicit form of the implicit model. Hence, the goal of this paper is to analyze the global parameters; growth rates and fixed points, and the local stability of implicit MMPS systems without converting them to an explicit form. 

Max-plus linear systems and max-min-plus systems have been studied in \citet{baccelli1992synchronization,mpw,eigmmp,gunawardena1994min,gunawardena1994timing}. Most of this literature is centered around the existence and uniqueness of the growth rate and stationary regime of max-plus algebraic systems that are governed by homogeneous, monotonic, and non-expansive functions. These functions have a unique growth rate \citep{gunawardena1994min}. MMPS systems, in general, lack the property of monotonicity and non-expansiveness because of the scaling operation \citep{eigmmps,mmpsWODES2024}. Hence, an MMPS system may not always have a unique growth rate. It can have multiple stationary regimes and growth rates based on the initial condition of the system. \citet{mmpsWODES2024,eigmmps} discuss the asymptotic behavior, calculation of growth rates, and fixed points of a homogeneous explicit MMPS system. 

Likewise, non-monotonic and expansive functions have been studied in various contexts. For example, the issue of analyzing the asymptotic behavior of a max-plus-scaling system is stated as an open problem in \citet{plus1999max}.
Homogeneous, non-expansive systems has been studied in \citet{cgq95,cgq98, abg21}. The latter considers min, plus and
scaling operations, showing that throughput (which is similar to growth rate) can be computed
in polynomial time by reducing to a Markov decision process.
Additionally, time-invariant, monotonic MMPS dynamics are equivalent to the dynamic programming equations of turn-based stochastic games \citep{agnt23}. Hence, studying MMPS systems provides insights into problems like throughput computation and mean payoff in stochastic turn-based games.

Another motivation to study implicit MMPS systems is to have a better understanding of the autonomous stability of the discrete-event system. Stability analysis of a dynamical system is important in developing suitable closed-loop control for the system. A convenient framework to examine the stability of a discrete-event system has been provided in \citet{baccelli1992synchronization, commault1998feedback}. The stability notions introduced in \citet{gupta2020framework} is one of the recent studies that assess the autonomous stability of a switching max-plus linear system. The main results of stability analysis of max-plus systems and max-min-plus systems were about the existence and calculation of fixed points \citep{olsder1994structural, cochet1999constructive, van2001characterization}. The existence of a fixed point guarantees stability of max-plus linear and max-min-plus systems. However, for MMPS systems, fixed points can exist without the system being stable. A local stability analysis of an explicit MMPS model is discussed in \citep{mmpsWODES2024}. In the current paper, we extend this analysis to incorporate implicit MMPS models as well.

The major contributions of this article are with regard to analyzing the dynamics of autonomous implicit MMPS systems that are time-invariant. First, we discuss the conditions (Theorem \ref{th:unique}) under which an implicit MMPS system is solvable. Then, for such solvable implicit MMPS systems, we derive an algorithm to calculate the fixed points and growth rates via a normalization procedure. We show that there may exist multiple fixed points corresponding to a single growth rate, and we provide a procedure to calculate this fixed point set. For a specific growth rate and fixed point set, we derive a normalized implicit MMPS system that can be mapped to a linear dynamical system with polyhedral constraints (Proposition \ref{prop:linsys}). Next, we assess the local max-plus bounded buffer stability of this mapped linear system by using tools from linear discrete-time stability analysis. Finally, we present a case study of analysis of an urban railway network modeled as an implicit MMPS system.    

The rest of this paper is organized as follows. In Section \ref{sec2}, we provide the background on max-plus algebra and MMPS systems. In Section \ref{sec_impsys}, we introduce a canonical form for an implicit autonomous MMPS system and present its important properties. Section \ref{solvabilityimp} deals with the solvability of an implicit autonomous MMPS system. In Section \ref{sectiongrowthrate}, we propose a novel methodology to derive a normalized system from an implicit MMPS system. Then we present the local stability of an implicit autonomous MMPS system in Section \ref{sec4} using the normalized system. Section \ref{sec:urban} presents a case study of an urban railway network modeled as an implicit MMPS system. We discuss the conclusions in Section \ref{sec7}. 
\section{Mathematical Preliminaries}
\label{sec2}
Let $\top=\infty$, $\varepsilon=-\infty$, $\rrr_\top = \rrr \cup \{\infty\}$, $\rrr_\varepsilon = \rrr \cup \{-\infty\}$, and let $\zzz^+$, denote the set of positive integers \citep{mpw}. The set $\cR$ indicates one of the following sets: $\rrr$, $\Reps$, or $\Rtop$. The notations $\mathbf{1}$ and $\mathbf{0}$ are used to denote the column vector with all components equal to $1$ and the zero vector of the appropriate dimension, respectively. In some cases, the notations $\one_n$ and $\zero_n$ specify the dimension, $n$, of the vectors $\one$ and $\zero$, respectively. The matrix $I_n$ is an identity matrix of size $n\times n$ in the conventional algebra. Let $\overline{n}$ denote the set of all positive integers up to $n\in \zzz^+$. The notation `$\intercal$' is used to denote transpose of a matrix/vector. For $a,b \in \cR$, the operations $a\oplus b = \max{(a,b)}$ and $a\otimes b = a+b$ are called max-plus addition and max-plus multiplication. The operations $a\oplus'b = \min{(a,b)}$ and $a\otimes'b = a+b$ are called min-plus addition and min-plus multiplication. Max-plus addition, max-plus multiplication, min-plus addition and min-plus multiplication \citep{mpw} of matrices $A, B \in \cR^{m \times n}$  and $C \in \cR^{n \times p}$ are defined as follows: 
\begin{align*}
\label{max-plusaddmult}
[ A\! \oplus\! B ]_{ij} &\!=\! \max([A]_{ij},[B]_{ij}\!), 
[ A\! \otimes\! C ]_{ij} \!=\! \max_{k\in \overline{n}}  ([A]_{ik}\! +\! [C]_{kj}\!) \\
[ A\! \oplus'\! B ]_{ij}  &\!=\! \min([A]_{ij},[B]_{ij}\!), 
[ A \!\otimes'\! C ]_{ij} \!=\! \min_{k\in\overline{n}}  ([A]_{ik} \!+\! [C]_{kj}\!).
\end{align*}
From conventional algebra and matrix theory, we use the following notations and definitions. The matrix-vector product of the matrix, $C\in \rrr^{n\times p}$ and a vector, $x\in \rrr^p$ is denoted as $C\cdot x$ and the scalar multiplication of a scalar $\mu \in \rrr$ and the vector $x$ is denoted as $\mu x$. The standard basis vector is denoted as a row vector of appropriate size, $\bm{e}_j$, with the $j$-th component equal to $1$ and other components equal to 0. 
\begin{definition}[Kronecker Product]
\label{kronecker}
    The Kronecker product of a matrix $A$ and vector $\one_n$, $A\boxtimes\one_n$ stacks $n$ copies of every row of the matrix $A$ vertically and $\one_n\boxtimes A$ stacks $n$ copies of the entire $A$ matrix vertically.
\end{definition}
\begin{definition}[Row-major stacking of a matrix]
\label{rowmajor}
    The row-major stacking of a matrix $A\in \cR^{n\times m}$ is the stacking of rows of a matrix into a column vector, vec$(A)$ as follows:
    $$\text{vec}(A) = [A_{1,\cdot}^{\intercal}~~A_{2,\cdot}^{\intercal}\hdots A_{n,\cdot}^{\intercal}]^{\intercal}$$
    where $A_{i,\cdot}, i\in \overline{n}$ denotes the $i$-th row of $A$.
\end{definition}
\begin{definition}[Permutation matrix]
    A square matrix $T\in \R^{n\times n}$ is a permutation matrix if there is exactly one $1$ in each row and column of the matrix and all the other entries are $0$. Multiplication by such matrices results in the permutation of the rows or columns of the other matrix.
\end{definition}
\begin{definition}\label{diagmat}
    Given the vector $v \in \rrr^n$, we define a max-plus diagonal matrix, $\text{d}_\otimes(v)$  and the min-plus diagonal matrix, $\text{d}_{\otimes'}(v)$
\begin{align*}
    \text{d}_\otimes(v) 
  = \matr{cccc}{v_1    & \eps   & \cdots & \eps   \\
                \eps   &  v_2   & 		 & \vdots \\
                \vdots &        & \ddots & \vdots \\
                \eps   & \cdots & \cdots & v_n    },
    \text{d}_{\otimes'}(v) 
  = \matr{cccc}{v_1    & \top   & \cdots & \top   \\
    	        \top   &  v_2   & 		 & \vdots \\
	            \vdots &        & \ddots & \vdots \\
	            \top   & \cdots & \cdots & v_n    }.
\end{align*}
The inverse max-plus diagonal matrix is  $ [\text{d}_\otimes(v) ]^{-1}=\text{d}_{\otimes}(-v)$ and the inverse min-plus diagonal matrix is $ [\text{d}_{\otimes'}(v)]^{-1} =\text{d}_{\otimes'}(-v)$.
\end{definition}
\begin{definition}[\cite{mpw}]
    A matrix $A\in \mathcal{R}^{n\times m}$ is said to be regular if $A$ has at least one finite element in each row.  
\end{definition}
\begin{definition}[\cite{mmps}]
	A max-min-plus-scaling (MMPS) function $f:\cR^m\to\cR$ of the variables $x_1,\dots,x_m \in \cR$ is defined by the grammar
	\begin{equation*}
		f:=x_i|\alpha|\max(f_k,f_l)|\min(f_k, f_l)| f_k + f_l|\beta f_k,\;  
	\end{equation*}
	where $\alpha \in \cR,\;\beta\in \rrr$ are some scalars and $f_k$, $f_l$ are MMPS functions. 
	For vector-valued MMPS functions, the above statements hold component wise. This definition is the `Backus-Naur' form from computer science, where the vertical bars separate the different ways by which the function can be recursively constructed.
 \end{definition}
\begin{definition}[\cite{mmps}]\label{def:MMPSsys}Consider the vector 
	\[ \chi(k) = \left[x^\intercal(k),\,x^\intercal(k-1),\ldots,  u^\intercal(k),\,w^\intercal(k)\right]^\intercal,\] 
	where $\chi \in \mathcal{X}\subseteq \cR^{n_\mathrm{p}}$, $x \in \cR^n$ is the state, $u \in \cR^q$ is the control input, and $w \in \cR^z$ is an external signal.  
	A max-min-plus-scaling (MMPS) system describes a state space model of the form 
	\begin{equation}\label{eq:MMPS}
		\begin{aligned}
			x(k) &= f(\chi(k)),
		\end{aligned}
	\end{equation}
	where $f$ is a vector-valued MMPS function in the variable $\chi$.
\end{definition} 
For implicit autonomous MMPS systems, we obtain $\chi(k) = [x^\intercal(k-1),\,x^\intercal(k)]^\intercal$, while for explicit autonomous MMPS systems we obtain $\chi(k) = x(k-1).$
\begin{definition}[Solvability of MMPS system]
    The MMPS system \eqref{eq:MMPS} is considered solvable if, for every $x(k-1)\in \rrr^n$, there exists at least one state $x(k)$ such that the dynamics of the state space description hold.
\end{definition}
\begin{definition}[\cite{eigmmps}]\label{def:ABC_MMPS}(ABC canonical form)
The autonomous explicit MMPS system $x(k) = f(x(k-1))$ with $f$ being an MMPS function can be reformulated as the following disjunctive canonical form:
 \begin{equation}\label{abcmmps}
	    x(k) = A \otimes ( B \otimes' (C\cdot x(k-1)))
	\end{equation}
	for some matrices $A\in\rrr_\varepsilon^{n\times m}$, $B\in \rrr_{\top}^{m\times p}$, $C\in\rrr^{p\times n}$, and $x\in\R^{n}, k\in \zzz^{+}$.
\end{definition} 
Note that the explicit MMPS systems \eqref{abcmmps} with regular $A$ and $B$ matrices are always solvable.  

In max-plus linear systems and max-min-plus systems, we typically focus on states having the dimension of time (temporal states $\xt$) that grow with a bounded growth rate. However, discrete-event systems can also have quantity states $\xq$ (e.g. the number of passengers on a train in an urban railway network) that will not grow unbounded, but that will remain bounded. Having two different types of states helps in the efficient modelling of discrete-event systems \citep{mmps}.
\begin{definition}[\cite{mmps}]\label{partial additive homogenety}
    (Partial additive homogeneity) Consider $\xt\in\R^{n_\mathrm{t}}$ and $\xq\in\R^{n_\mathrm{q}}$ and MMPS functions $f_\mathrm{t}:\R^{n_\mathrm{t}\times n_\mathrm{q}} \rightarrow \R^{n_\mathrm{t}}$ and $f_\mr{q}: \R^{n_\mathrm{t}\times n_\mathrm{q}} \rightarrow \R^{n_\mathrm{q}}$. Then, the system 
    \begin{align*}
        \begin{bmatrix}
        \xt(k)\\\xq(k)\end{bmatrix} = \begin{bmatrix}f_\mr{t}(\chi_\mr{t}(k),\chi_\mr{q}(k))\\f_\mr{q}(\chi_\mr{t}(k),\chi_\mr{q}(k))\end{bmatrix}
    \end{align*} where $\chi_\mr{t}(k) = [\xt^\intercal(k),\,\xt^\intercal(k-1)]^\intercal$, $\chi_\mr{q}(k) = [\xq^\intercal(k),\,\xq^\intercal(k-1)]^\intercal$, is partly additively homogeneous if for any $h\in \R$ the following holds:
    \begin{align}
    \begin{aligned}
        f_\mr{t}(\chi_\mr{t}+h\one,\chi_\mr{q}) &= f_\mr{t}(\chi_\mr{t},\chi_\mr{q})+h\one,\\
        f_\mr{q}(\chi_\mr{t}+h\one,\chi_\mr{q}) &= f_\mr{q}(\chi_\mr{t},\chi_\mr{q}).
         \end{aligned}
         \label{ti:prop}
    \end{align}
    The subscript `$\mr{t}$' is associated with temporal states and `$\mr{q}$' is associated with quantity states.
\end{definition}
An autonomous MMPS system is time-invariant if it is partly additively homogeneous.  
\begin{definition}
  (\citep{mmpsWODES2024} Growth rate, fixed point) The time-invariant MMPS system $x(k) = f(\chi(k)))$, $x \in \cR^n$ and $f:\cR^n \rightarrow \cR^n$ is said to have a growth rate (additive eigenvalue) if there exists a real number $\lambda \in \R$ and a vector $v \in \rrr^n$ such that
$$f(v) = v + \lambda s,$$
where $s = [\one_{n_\mr{t}}^{\intercal}~\zero_{n_\mr{q}}^{\intercal}]^{\intercal}$, $n_\mr{t}$ and $n_\mr{q}$ are the number of temporal states and quantity states, respectively. The scalar $\lambda$ is then called a growth rate and the vector $v$ is called a corresponding fixed point (additive eigenvector). Further, if $v$ is a fixed point, $v+h[\one_{n_\mr{t}}^{\intercal}~\zero_{n_\mr{q}}^{\intercal}]^{\intercal}$ is also a fixed point for any $h\in \rrr$.
\end{definition} 

\begin{definition}
  (Multiplicative eigenvalue, multiplicative eigenvector) The linear system $x(k) = M\cdot x(k-1)$, $x \in \R^n$ and $M\in\R^{n\times n}$ is said to have a multiplicative eigenvalue if there exists a real number $\mu \in \R$ and a vector $w\neq \zero \in \R^n$ such that
$M \cdot w = \mu  w$.
The scalar $\mu$ is then called a multiplicative eigenvalue and $w$ is a corresponding multiplicative eigenvector. 
\end{definition} 

\begin{definition}[\cite{mpw}]
    The Hilbert projective norm of a vector $x\in\rrr^n$ in max-plus algebra is defined as 
    \begin{align*}
    \pnorm{x} = \underset{i\in \overline{n}}{\max} \;x_i - \underset{i\in \overline{n}}{\min}\;x_j.
    \end{align*}
\end{definition}
\begin{definition}[\cite{gupta2020framework}]
    An autonomous discrete-event system is max-plus bounded buffer stable if for every initial state, $x_0\in \rrr^n$, there exists a bound $m(x_0)\in \rrr$ such that the states are bounded in Hilbert's projective norm: $\pnorm{x(k)} \leq m(x_0) \;\forall k\in \zzz^+$. 
\end{definition}
This implies that all the temporal states, i.e., the states with the dimension of time, have same growth rate and all the quantity states have zero growth rate.
\section{Implicit Autonomous MMPS systems}
\label{sec_impsys}
Similar to the ABC-canonical form for an explicit MMPS system introduced in \citep{eigmmps}, we define the following form for an implicit MMPS system.
\begin{definition}(Implicit ABCD canonical form)
Consider the following system:
	\begin{equation} 
		  x(k) = A \otimes ( B \otimes' (C\cdot x(k-1) + D\cdot x(k) )).
		  \label{impsys}
	\end{equation}	  
	This system is an implicit MMPS system in the disjunctive ABCD canonical form.
\end{definition} 
It can be easily observed that the implicit ABCD canonical form is a superset of the explicit canonical form \eqref{abcmmps}. When $D=0$ in \eqref{impsys}, we get the explicit form \eqref{abcmmps}. Thereupon, all the analysis methods presented next are more general and applicable to explicit systems as well. 
\begin{proposition}
Any implicit MMPS system (\ref{eq:MMPS}) can be written in the ABCD canonical form.
\end{proposition}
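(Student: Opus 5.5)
The idea is to treat the implicit system as an explicit MMPS function of the stacked vector $\chi(k) = [x^\intercal(k-1),\,x^\intercal(k)]^\intercal \in \rrr^{2n}$ and then reuse the ABC canonical form of Definition~\ref{def:ABC_MMPS}. In \eqref{eq:MMPS} the right-hand side $f(\chi(k))$ is an MMPS function of $\chi$. The result behind Definition~\ref{def:ABC_MMPS} (from \cite{eigmmps}, based on the disjunctive normal form of MMPS functions in \cite{mmps}) says that any MMPS function $g:\rrr^{N}\to\rrr^n$ can be written as $g(y) = A\otimes(B\otimes'(\tilde C\cdot y))$. The nontrivial part of that result is that every MMPS function is continuous piecewise affine, hence a max of mins of affine functions; I take this as known. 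Applying it with $N=2n$ and $y=\chi(k)$ gives matrices $A\in\rrr_\eps^{n\times m}$, $B\in\rrr_\top^{m\times p}$ and $\tilde C\in\rrr^{p\times 2n}$ with
\[ x(k) = A\otimes\bigl(B\otimes'(\tilde C\cdot \chi(k))\bigr). \]

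Next, partition $\tilde C = [\,C \;\; D\,]$ column-wise, with $C,D\in\rrr^{p\times n}$. Then $\tilde C\cdot\chi(k) = C\cdot x(k-1) + D\cdot x(k)$, and substituting this gives exactly \eqref{impsys}.

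The main point to check is the affine constants. The normal form produces terms like $\alpha + c^\intercal y$, and the ABC form has no explicit offset vector. Following \cite{eigmmps}, I would absorb the constants into the finite entries of $B$ (or $A$): each affine term $\alpha_{j}+c_j^\intercal y$ is represented by putting $\alpha_j$ in the $B$ entry that selects row $j$ of $\tilde C\cdot y$. Duplicating rows of $\tilde C$ where needed makes each row of $\tilde C\cdot y$ carry its own constant.

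Constant terms $\alpha$ with no variable dependence are handled by using a zero row of $\tilde C$ combined with entry $\alpha$ in $B$. Entries $\pm\infty$ of $\alpha\in\cR$ are handled by the $\eps$ and $\top$ entries allowed in $A$ and $B$.

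Finally, I would check that this row duplication does not disturb the partition into $C$ and $D$: it only changes the number of rows $p$, not the column split. This completes the reduction.
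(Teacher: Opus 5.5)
Your proposal is correct and takes the same route as the paper, whose proof only invokes the ABC canonical form result of \cite{eigmmps} for explicit MMPS systems and says it carries over to the implicit case. Applying that result directly to the stacked argument $\chi(k)=[x^\intercal(k-1),\,x^\intercal(k)]^\intercal$ and splitting $\tilde C=[\,C\;\;D\,]$ column-wise is simply the explicit version of that one-line argument.
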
 
\begin{proof}
    In \citep{eigmmps}, we proved that any explicit MMPS system can be written in the ABC canonical form. The proof for implicit MMPS system can be formulated in the same way.
\end{proof}
Note that this canonical form is not unique. In addition, any implicit MMPS system can be written in a conjunctive canonical form as 
\begin{align}
\label{second representation}
    x(k) = A_2\otimes'(B_2\otimes (C_2\cdot x(k-1)+D_2\cdot x(k))),
\end{align}
where $A_2\in \rrr_{\top}^{n\times \tilde{m}}$, $B_2\in \rrr_{\varepsilon}^{\tilde{m}\times p}$ and $C_2, D_2\in \rrr^{p\times n}$. The order of $\otimes'$ and $\otimes$ operations does not matter in representing an MMPS system. However, there is redundancy when we move from a disjunctive to a conjunctive form and vice versa \citep{wodes2002}. Throughout the current paper, we follow \eqref{impsys} for the analysis. It can be shown that all the analysis will still hold for the representation \eqref{second representation}. 

Now, we define the structure of the $A$, $B$, $C$ and $D$ matrices of an MMPS system with both time and quantity states as follows.
\begin{definition}
An autonomous implicit MMPS system with both temporal states and quantity states can be written as
\begin{equation} 
  \begin{aligned}
	\!\!\!\!
	     \matr{cc}{ \xt(k)\\ \xq(k)} 
		\!\!&=\!\! \underbrace{\matr{cc}{
			\At\!&\!\bm{\varepsilon}\\
			\bm{\varepsilon}\!&\!\Aq 	
		}}_{A}\!\!\otimes\!
		\Biggl(\!\underbrace{\matr{cc}{
			\Bt\!&\!\bm{\top}\\ \bm{\top}\!&\!\Bq
		}}_{B}\!\!\otimes'\!
		\biggl(\underbrace{\matr{cc}{
			C_{11}\!&\!C_{12}\\
			C_{21}\!&\!C_{22}
		}}_{C}\!\!\cdot\!\!
	    \matr{cc}{
		\xt(k-1)\\\xq(k-1)
	}\!\!  \\
	& \hspace*{2cm} 
	+\!\underbrace{\matr{cc}{
			D_{11}\!&\!D_{12}\\
			D_{21}\!&\!D_{22}
	}}_{D}\!\!\cdot\!\!
	\matr{cc}{
		\xt(k)\\\xq(k)
	}\biggl)\!\!\Biggl) ,
  \end{aligned}
  \label{eq:TQABCDsys}
\end{equation}
where 
$\xt \in \rrr^{n_\mathrm{t}}$, $\xq \in \rrr^{n_\mathrm{q}}$, 
$\At \in \Reps^{n_\mathrm{t} \times m_\mathrm{t}}$, $\Aq \in \Reps^{n_\mathrm{q} \times m_\mathrm{q}}$, $\Bt \in \Rtop^{m_\mathrm{t} \times p_\mathrm{t}}$, $\Bq \in \Rtop^{m_\mathrm{q} \times p_\mathrm{q}}$, 
$C_{11}, D_{11}\in \rrr^{p_\mathrm{t} \times n_\mathrm{t}}$, $C_{12}, D_{12}\in \rrr^{p_\mathrm{t} \times n_\mathrm{q}}$, $C_{21}, D_{21}\in \rrr^{p_\mathrm{q} \times n_\mathrm{t}}$, and $C_{22}, D_{22}\in \rrr^{p_\mathrm{q} \times n_\mathrm{q}}$. The notations $\bm{\varepsilon}$ and $\bm{\top}$ represent matrices of appropriate sizes with all elements equal to $\varepsilon$ and $\top$ respectively. 
\end{definition}
\begin{proposition}[Extended state MMPS system]
    An implicit MMPS system can be represented in the following extended state form:
    \begin{align}
    \label{extenmmps}
        \begin{aligned}
            x(k) &=A\otimes y(k),\\
            y(k) &= B\otimes'z(k),\\
            z(k) &= C\cdot x(k-1)+D\cdot x(k).
        \end{aligned}
    \end{align}
\end{proposition}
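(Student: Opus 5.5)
The plan is to reduce the statement to the ABCD canonical form and then introduce the intermediate quantities as auxiliary state variables. By the earlier proposition, any implicit MMPS system \eqref{eq:MMPS} can be written as \eqref{impsys} for suitable matrices $A\in\Reps^{n\times m}$, $B\in\Rtop^{m\times p}$ and $C,D\in\rrr^{p\times n}$. So it suffices to show that \eqref{impsys} and \eqref{extenmmps} describe the same trajectories $\{x(k)\}$.

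For the direction from \eqref{impsys} to \eqref{extenmmps}, let $\{x(k)\}$ satisfy \eqref{impsys}. Define, for each $k$,
\[
z(k) := C\cdot x(k-1)+D\cdot x(k)\in\rrr^p, \qquad y(k) := B\otimes' z(k)\in\Rtop^m .
\]
The third and second equations of \eqref{extenmmps} then hold by construction. Substituting these definitions into \eqref{impsys} gives $x(k)=A\otimes y(k)$, which is the first equation.

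For the converse, suppose $(x,y,z)$ satisfies \eqref{extenmmps}. Substitute the third equation into the second, and the result into the first. This yields exactly \eqref{impsys}, because the max-plus and min-plus matrix products are evaluated entrywise on the given vectors, so composition is literal substitution. The two representations therefore have identical $x$-trajectories. Solvability of one is equivalent to solvability of the other, since $y(k)$ and $z(k)$ are uniquely determined by $x(k-1)$ and $x(k)$.

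The only delicate step is well-definedness over the extended reals. In $y(k)=B\otimes' z(k)$, the vector $z(k)$ is finite because $C$, $D$ and $x$ are real, so $B\otimes' z(k)$ is well defined. Its entries lie in $\Rtop$, and an entry equals $\top$ only if the corresponding row of $B$ is entirely $\top$. In $A\otimes y(k)$, an expression of the form $\varepsilon+\top$ could appear. I would handle this with the standard max-plus convention $\varepsilon\otimes a=\varepsilon$. Alternatively, I would note that if $B$ is regular then $y(k)\in\rrr^m$, so no ambiguity arises. The same argument applies verbatim to the temporal/quantity block structure \eqref{eq:TQABCDsys}, with $y$ and $z$ partitioned conformally.
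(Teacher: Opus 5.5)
Your proposal is correct and takes the same route as the paper, whose entire proof is the substitution of $y(k)$ and $z(k)$ into $x(k)=A\otimes y(k)$ to recover \eqref{impsys}. The reverse direction (defining $z(k)$ and $y(k)$ from a trajectory of \eqref{impsys}) and your well-definedness remarks are sound additions the paper leaves implicit; the remarks cover finiteness of $y(k)$ under the regularity of $B$ assumed in the paper, or otherwise the convention $\varepsilon\otimes\top=\varepsilon$.
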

\begin{proof}
    By substituting $y(k)$, and $z(k)$ in $x(k)$, we directly get \eqref{impsys}.
\end{proof}
\begin{proposition}
\label{propC}
When an implicit MMPS system (\ref{impsys}) is time invariant, it satisfies the following properties, 
	\begin{align}
 \begin{aligned}
	   \sum_{i\in\overline{n_\mr{t}}} [C_{11}\; D_{11}]_{\ell i} \!=\! 1, \; \forall \ell \in \overline{p_\mr{t}},\;
	   \sum_{i\in \overline{n_\mr{t}}} [C_{21}\; D_{21}]_{ti} \!=\! 0, \; \forall t \in \overline{p_\mr{q}}.
    \end{aligned}
    \label{eq:propC}
    \end{align}	 
\end{proposition}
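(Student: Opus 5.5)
The plan is to combine partial additive homogeneity with the structure of the ABCD form, working in the extended state form \eqref{extenmmps}. Time invariance means that for every $h\in\rrr$ the map defining $x(k)$ satisfies \eqref{ti:prop}. Shifting $\chi_\mr{t}=[\xt^\intercal(k),\xt^\intercal(k-1)]^\intercal$ by $h\one$ while keeping the quantity part fixed must then shift the temporal output by $h\one$ and leave the quantity output unchanged. The key observation is that the block-diagonal structure of $A$ and $B$ decouples the operations. The temporal outputs depend only on the first $p_\mr{t}$ rows of $z(k)$, namely $z_\mr{t}=C_{11}\xt(k-1)+C_{12}\xq(k-1)+D_{11}\xt(k)+D_{12}\xq(k)$. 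The quantity outputs depend only on the last $p_\mr{q}$ rows, $z_\mr{q}=C_{21}\xt(k-1)+\dots+D_{22}\xq(k)$. Since max-plus and min-plus products commute with adding a common scalar, $A\otimes(B\otimes'(z+h\one))=A\otimes(B\otimes' z)+h\one$ for regular rows. Hence a sufficient mechanism for \eqref{ti:prop} is that every row $\ell$ of $z_\mr{t}$ shifts by exactly $h$ and every row $t$ of $z_\mr{q}$ is unchanged.

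Under the shift, row $\ell$ of $z_\mr{t}$ changes by $h\sum_{i}[C_{11}\;D_{11}]_{\ell i}$, and row $t$ of $z_\mr{q}$ changes by $h\sum_i[C_{21}\;D_{21}]_{ti}$. Thus \eqref{eq:propC} is exactly the requirement that these increments be $h$ and $0$. The argument is then to show the converse direction, namely that homogeneity forces these row sums. Fix a row $\ell$ of $z_\mr{t}$ that is relevant, i.e.\ one that attains the min/max in some component for some argument. Choose the argument generically so that the active row is unique and the expression is locally affine, with $x_{\mr{t},j}=[C_{11}\;D_{11}]_{\ell\cdot}\chi_\mr{t}+\dots$ near that point. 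Perturbing by a small $h$ keeps the same active row. Comparing with $x_{\mr{t},j}+h$ from \eqref{ti:prop} gives $h\sum_i[C_{11}\;D_{11}]_{\ell i}=h$ for all small $h$, so the row sum is $1$. The same argument on quantity components gives row sum $0$ for $[C_{21}\;D_{21}]$.

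The main obstacle is rows of $C,D$ that are never active, or are active only on a measure-zero set. For such rows, homogeneity imposes no constraint, so strictly speaking the property can only be asserted after discarding redundant rows. Because the canonical form is not unique, I would handle this by working with a (w.l.o.g.) minimal representation in which every affine term $z_\ell$ is active on some open set. Alternatively, I would note that inactive rows can be replaced by rescaled affine terms satisfying \eqref{eq:propC} without changing $f$. Under that convention the local-affine argument covers all rows and \eqref{eq:propC} follows. A secondary subtlety is the implicit dependence on $x(k)$: the shift must be applied to the solution pair $(\xt(k-1),\xt(k))$ jointly. I would state it as a property of the function $f(\chi_\mr{t},\chi_\mr{q})$ in Definition \ref{partial additive homogenety}, treating $\xt(k)$ as a free argument, which avoids any solvability issue.
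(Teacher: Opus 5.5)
Your route is genuinely different from the paper's, and more careful. The paper's proof simply asserts that, when $\xt(k-1)$ and $\xt(k)$ are shifted by $h\one$, time invariance makes the extended temporal states $y_\mr{t}(k)$ and $z_\mr{t}(k)$ shift by $h\one$ and leaves $z_\mr{q}(k)$ unchanged. It then reads off $C_{11}\one+D_{11}\one=\one$ and $C_{21}\one+D_{21}\one=\zero$. That is exactly the sufficient mechanism of your first paragraph, with the converse taken for granted.

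Your local-affine argument is what actually supplies that converse. Suppose $f_i(\chi)=[A]_{ij}+[B]_{j\ell}+z_\ell$ on a nonempty open set of $\chi$. Small shifts of $\chi$ along the temporal direction stay in that set, so homogeneity forces the row-sum condition on row $\ell$. You do not even need the active term to be unique. Treating $\xt(k)$ as a free argument is consistent with the paper's definition of partial additive homogeneity, so the implicit structure adds no difficulty.

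Your caveat about inactive rows is not a technicality: the proposition as literally stated is false, and the paper's proof does not see this. Take $n=n_\mr{t}=1$, $D=0$, $A=[0\;\;0]$, $B=\begin{bmatrix}0&0&\top\\ \top&\top&0\end{bmatrix}$ and $C=[0\;\;2\;\;1]^\intercal$. Then $x(k)=\max(\min(0,2x(k-1)),x(k-1))=x(k-1)$. This is a regular, time-invariant system, yet its first two rows have row sums $0$ and $2$.

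The same example shows that your ``w.l.o.g.'' step needs an argument. Rows $1$ and $2$ are never active, yet deleting either one alone (setting the corresponding entry of $B$ to $\top$) changes $f$. To guarantee a representation in which every row is active on an open set, you would need something like the max--min (lattice) representation of a continuous piecewise-affine function by its own affine pieces. The conclusion is then about that representation, not about arbitrary $C,D$. Your second alternative (replacing inactive rows by admissible terms without changing $f$) is asserted without proof; either prove it or drop it.

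With the non-redundancy hypothesis stated explicitly, your argument is a correct proof of the corrected proposition, which is more than the paper's proof delivers.
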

\begin{proof}
    Time-invariance implies that the MMPS system is additively homogeneous with respect to the temporal states. This means that if the temporal states $x_\mr{t}(k)$ are shifted by an amount $h\one$, all the extended temporal states \eqref{extenmmps} are shifted by the same amount. Therefore,
    \begin{align*}
        \xt(k)+h\one &= \At\otimes(y_\mr{t}(k)+h\one)=\At\otimes y_\mr{t}(k) + h\one \\
        y_\mr{t}(k)+ h\one &=  \Bt\otimes' (z_\mr{t}(k)+h\one)=\Bt \otimes' z_\mr{t}(k)+h\one \\
        z_\mr{t}(k)+h\one &= C_{11}\cdot (\xt(k-1)+h\one)+C_{12}\cdot \xq(k-1)\\&=+ D_{11}\cdot (\xt(k) + h\one)+D_{12}\cdot \xq(k).
    \end{align*}
    From \eqref{ti:prop}, the extended temporal state $z_\mr{t}(k)$ is additively homogeneous when 
    \begin{align*}
    z_\mr{t} (k)+h\one&= C_{11}\cdot \xt(k-1)+C_{12}\cdot \xq(k-1)+\\&\;\;\;\; D_{11}\cdot \xt(k) +D_{12}\cdot \xq(k)+ h\one.
    \end{align*}
    This is valid when
    $C_{11}\cdot h\one +D_{11} \cdot h\one = h\one$.
    Therefore, $\sum_{i\in\overline{n_\mr{t}}} [C_{11}\; D_{11}]_{\ell i} = 1, \; \forall \ell \in \overline{p_\mr{t}}$. The quantity states should not change with any shift in temporal states. From \eqref{eq:TQABCDsys} and \eqref{extenmmps}, only $z_\mr{q}(k)$ depends on the temporal states $\xt(k)$. So, when the temporal states are shifted,
\begin{align*}
x_\mr{q}(k) &= A_\mr{q}\otimes y_\mr{q}(k),\\
y_\mr{q}(k) &= B_\mr{q}\otimes'z_\mr{q}(k),\\
    z_\mr{q}(k) &= C_{21}\cdot (\xt(k-1)+h\one)+C_{22}\cdot \xq(k-1)\\&\;\;\;\; +D_{21}\cdot (\xt(k)+h\one) +D_{22}\cdot \xq(k).
\end{align*}
From \eqref{ti:prop}, the MMPS system is partially additively homogeneous when 
\begin{align*}
    z_\mr{q}(k) = f(x_\mr{t}+h\one, x_q) &= C_{21}\cdot \xt(k-1)+C_{22}\cdot \xq(k-1)\\&\;\;\;\; +D_{21}\cdot \xt(k) +D_{22}\cdot \xq(k).
\end{align*}
This is valid when $C_{21}\cdot h\one+D_{21}\cdot h\one =\! 0$. Hence, $\sum_{i\in \overline{n_\mr{t}}} [C_{21}\; D_{21}]_{ti} = \!0,~  \forall t \!\in \!\overline{p_\mr{q}}$.
\end{proof}
\begin{assumption}
    We assume that the MMPS systems in this paper are autonomous, regular (have regular system matrices $A$, $B$), time invariant, and have finite initial conditions ($x(0)\in \rrr^n$). 
\end{assumption}
The regularity and time invariance of max-plus algebraic systems are common assumptions in literature, \citep{olsder1994structural,mmps} as many applications of discrete-event systems automatically satisfy these properties. Finite initial conditions are required for the system to be well-defined, and is a reasonable assumption to have for real world applications. 
\section{Solvability of implicit MMPS systems}
\label{solvabilityimp}
In a number of practical applications, the implicit dependency
between some of the states occurs. For example, the
departure time of a train from a station $j$ in an urban
railway line depends on the arrival time of that train
in the station $j$ (see Section \ref{sec:urban}). Here, both these states
correspond to the event cycle $k$ but occur in a progressive
order. For complex applications, it may not be easy to
distinguish the order of states in an event cycle $k$. One
such example is when there is first come, first served
processing, where the sequence of incoming states are usually
not predetermined. The aim of this section is to determine
whether there exists an ordering of states such that each
state depends only on events that occurred earlier in the
same event cycle ‘$k$’. Such a dependency would ensure the
existence of an explicit expression for the dynamics of the
system. 

In general, an implicit system of the form \eqref{impsys} can be written in an explicit form.
The expression for this explicit MMPS system will be a nested MMPS system, 
	\begin{align} 
			x(k) &= A \otimes ( B \otimes' (C \cdot x(k-1) + D\cdot \nonumber \\
				 & (A \otimes ( B \otimes' \cdots  (C \cdot x(k-1) + D \cdot  \label{nestsys} \\
			     &\ldots (A \otimes ( B \otimes' (C \cdot x(k-1) )))  \cdots ))))) \nonumber.
	\end{align}	  
If this expression has a finite number of terms, we will have a solution for the state $x(k)$. However, when there are an infinite number of terms, we may not always find a solution. Rewriting an implicit MMPS system into an explicit form is a complex and difficult task. Also, rewriting the nested MMPS system \eqref{nestsys} with finitely many terms in the explicit ABC canonical form \eqref{abcmmps} may result in system matrices of impractically large size \citep{wodes2002}. Hence, the goal here is to prove the existence of an explicit form of the implicit MMPS system, but not to compute it. If such an explicit form exists, the implicit MMPS system is solvable and can be analyzed in its implicit form without increasing the computational complexity, as will be discussed in Section \ref{sectiongrowthrate}.

Now, consider the three structure matrices ($S_{A}$, $S_{B}$, $S_{D}$) as follows:
\begin{equation}
	\label{structuremat}
  \begin{aligned}
	&[S_\mr{A}]_{i,j} =  \begin{cases} 1 ~ \text{if} ~[A]_{i,j} \ne \eps \\
		                         0 ~ \text{if} ~[A]_{i,j} = \eps 
		            \end{cases}\hspace*{-1mm}
	[S_\mr{B}]_{i,j} =  \begin{cases} 1 ~ \text{if} ~[B]_{i,j} \ne \top \\
								 0 ~ \text{if} ~[B]_{i,j} = \top
					\end{cases} \!\!\!\!\!\! \\
	&[S_\mr{D}]_{i,j} =  \begin{cases} 1 ~ \text{if} ~[D]_{i,j} \ne 0 \\
								 0 ~ \text{if} ~[D]_{i,j} = 0
					\end{cases}.
  \end{aligned}
\end{equation}
The matrix product $S = S_\mr{A}\cdot S_\mr{B}\cdot S_\mr{D}$ determines the implicit dependency of the states $x_i(k), i\in \overline{n}$ to states $x_j(k), j\in \overline{n}$. When there exists an order in which events occur in a cycle `$k$', the matrix $S$ can be altered such that the state $x_i(k)$ implicitly depends only on states $x_j(k)$ where $j<i$. 
\begin{theorem}
	\label{th:unique}
 If there exists a permutation matrix $T\in \rrr^{n\times n}$ and its inverse in conventional plus-times algebra such that
	\begin{equation} 
		\label{eq:G}
		F = T\cdot S_A \cdot S_B \cdot S_{D}\cdot T^{-1}
	\end{equation}	
	is a strict lower triangle matrix, then there always exists a unique solution $x(k)$, $k>0$ for the implicit MMPS system (\ref{impsys}) for any state $x(k-1)$. 
\end{theorem}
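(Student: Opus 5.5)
Our plan is to read $F$ as the adjacency pattern of the implicit dependency graph and solve \eqref{impsys} by forward substitution in the permuted order. We first record how $S=S_A\cdot S_B\cdot S_D$ encodes dependence. Consider the extended form \eqref{extenmmps} with $x(k-1)$ fixed and finite. The component $x_i(k)=\max_{j}([A]_{ij}+y_j(k))$ involves only those $y_j(k)$ with $[S_A]_{ij}=1$. Each $y_j(k)=\min_l([B]_{jl}+z_l(k))$ involves only those $z_l(k)$ with $[S_B]_{jl}=1$. Each $z_l(k)=[C\cdot x(k-1)]_l+\sum_r [D]_{lr}x_r(k)$ involves only those $x_r(k)$ with $[S_D]_{lr}=1$. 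All entries of the structure matrices are nonnegative, so no cancellation occurs in the product. Hence $[S]_{ir}\neq 0$ exactly when there is a chain $i\to j\to l\to r$ through finite entries, and $x_i(k)$, viewed as a function of $x(k)$ through the right-hand side of \eqref{impsys}, depends only on those $x_r(k)$ with $[S]_{ir}\neq0$.

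Next we change coordinates. Let $\pi$ be the permutation with $(T\cdot v)_{\pi(i)}=v_i$, and set $\tilde x=T\cdot x$. Then $F=T\cdot S\cdot T^{-1}$ satisfies $[F]_{\pi(i)\pi(r)}=[S]_{ir}$. Strict lower triangularity of $F$ therefore means that, after relabeling, $\tilde x_q(k)$ depends only on $\tilde x_s(k)$ with $s<q$. In particular $[F]_{qq}=0$, so no state depends on itself, even through the min or scaling layers.

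We then proceed by induction on $q=1,\dots,n$. For $q=1$, every term $[D]_{lr}x_r(k)$ feeding $\tilde x_1(k)$ has $[D]_{lr}=0$, so $\tilde x_1(k)$ is given explicitly by an MMPS expression in $x(k-1)$ alone. Now assume $\tilde x_1(k),\dots,\tilde x_{q-1}(k)$ have been determined uniquely as finite reals. The $q$-th row of \eqref{impsys}, in permuted coordinates, is then an explicit evaluation $\tilde x_q(k)=\max_j\bigl([A]+\min_l([B]+[C\cdot x(k-1)]_l+\sum_{s<q}(\cdot)\,\tilde x_s(k))\bigr)$, which defines $\tilde x_q(k)$ uniquely. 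The resulting vector satisfies every row of \eqref{impsys} by construction. Conversely, any solution must coincide with it component by component by the same induction, which gives uniqueness.

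The step we expect to need care is finiteness, so that the solution lies in $\rrr^n$ rather than containing $\pm\infty$. Here we use the regularity assumption on $A$ and $B$. Each max in row $i$ has at least one finite entry of $A$, and each min has at least one finite entry of $B$. Since $C,D$ are real and the previously computed states are finite, every $z_l(k)$ is finite. It follows that every $y_j(k)$ is a finite min and every $x_i(k)$ is a finite max, so the induction also preserves finiteness. Finally, we note that the unrolled nested expression \eqref{nestsys} terminates after at most $n$ levels, because $F$ is nilpotent with $F^n=0$. This gives the explicit closed form.
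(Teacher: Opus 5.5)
Your proposal is correct and follows the same route as the paper: reorder the states with $T$ so that, by strict lower triangularity of $F$, each $\tilde x_q(k)$ depends only on $x(k-1)$ and on $\tilde x_s(k)$ with $s<q$, and then obtain the explicit form by successive substitution. Your write-up is more complete than the paper's proof, which leaves the following points implicit: you justify via nonnegativity of the structure matrices and the identity $[F]_{\pi(i)\pi(r)}=[S]_{ir}$ that $F$ captures all implicit dependencies, you prove uniqueness by the same induction, and you use regularity of $A$ and $B$ to guarantee that the solution is finite.
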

    \begin{proof} 
    In this proof, we show that under a state transformation, it is possible to get an explicit form for the implicit MMPS system provided $F$ is strictly lower triangular. Existence of an explicit form implies the solvability of the implicit MMPS system. 
    
 Let $\Tx(k) = T \cdot x(k)$ be a state transformation, which is just a change of order of the state $x(k)$ as $T$ is a permutation matrix. Then, the new system dynamics become 
\begin{equation}
	\label{impsysG}
	\begin{aligned} 
		\Tx(k) 	&= f(\Tx(k),\Tx(k-1)) \\
				&= T\cdot( A \otimes ( B \otimes' (C\cdot T^{-1} \cdot \Tx(k-1) \\
				&  \hspace*{3cm}+ D\cdot T^{-1} \cdot \Tx(k) ))).
	\end{aligned}	
\end{equation}  
Consider the matrix $F$ as defined in \eqref{eq:G}. This matrix can represent a communication graph\footnote{A communication graph is a directed graph $\mathcal{G}$ which is a pair $(\mathcal{N},\mathcal{D})$ associated with a square matrix $F\in \R^{n\times n}$ where $\mathcal{N}$ is a finite set of nodes and $\mathcal{D}$ is a set of ordered pair of nodes called arcs.}, \citep{mpw} of the implicit MMPS system, intuitively without the weights. This means that the matrix $F$ represents a graph, where a non-zero $F_{i,j}$ tells us there is a connection between $\Tx_i(k)$ on the left side of \eqref{impsysG} and $\Tx_j(k)$ on the right side of \eqref{impsysG}, i.e.,

\begin{align*}
	\Tx_1(k) &= f_1(\Tx(k\!-\!1)) \\ 
	\Tx_i(k) &= f_i(\Tx_1(k),\hdots,\Tx_{i-1}(k),\Tx(k\!-\!1)), \;i\in \{2\hdots,n\}. 
\end{align*}
Successive substitution leads to:
\begin{align*}
	\Tx_1(k) &= f_1(\Tx(k\!-\!1)) \\
			 &= \bar{f}_1(\Tx(k\!-\!1)) \\ 
	\Tx_2(k) &= f_2(\bar{f}_1(\Tx(k\!-\!1)),\Tx(k\!-\!1)) \\
			 &= \bar{f}_2(\Tx(k\!-\!1)) \\
             \;\;\;\vdots & \;\;\;\;\;\vdots \\
             \Tx_n(k) &= f_n(f_1(\Tx(k\!-\!1)),\ldots,\Tx_{n\!-\!1}(k),\Tx(k\!-\!1))\\
			 &= \bar{f}_n(\Tx(k\!-\!1)).
		\end{align*}
This shows the existence of an explicit form:
\begin{align*}
	\Tx(k) = \bar{f}(\Tx(k\!-\!1))
\end{align*} 
and thus, 
\begin{align*}
	x(k) = T^{-1}\cdot \bar{f}(T\cdot x(k\!-\!1)),
\end{align*} 
which is an explicit MMPS function with finite terms.  
\end{proof}

\section{Time-invariant MMPS systems with multiple growth rates}
\label{sectiongrowthrate}
 The growth rate of an MMPS system is defined as the additive eigenvalue of the MMPS function that defines the system. A stable MMPS system should have the same growth rate for all of its temporal states and zero growth rate for all the quantity states, i.e., quantity states should be bounded for each event `$k$'. An MMPS system can have multiple temporal growth rates and fixed points based on the initial condition of the system. This is analogous to the existence of multiple equilibrium points for a nonlinear system in conventional algebra. In \citep{mmpsWODES2024}, we proposed a set of linear programming problems (LPPs) to find the growth rates and fixed points of an explicit MMPS system. 
Here, we extend this to find the growth rates and fixed points of an implicit autonomous MMPS system. 

Consider a time-invariant implicit MMPS system (\ref{impsys}) with 
$A \in \rrr^{n \times m}$, $B \in \rrr^{m \times p}$, and $C, D \in \rrr^{p \times n}.$ 
    The scalar $\lambda$ is a temporal growth rate of the system \eqref{extenmmps}, if there exists a fixed point $(\xte,\xqe,\yte,\yqe,\zte,\zqe)$ such that
\begin{align}
\begin{aligned}
\zte   &= C_{11} \cdot   (\xte-\lambda\one) + C_{12} \cdot \xqe +D_{11}\cdot \xte +D_{12}\cdot \xqe\\
\zqe   &= C_{21} \cdot   (\xte-\lambda\one) + C_{22} \cdot \xqe +D_{21}\cdot \xte +D_{22}\cdot \xqe\\
\yte   &= \Bt \otimes' \zte, \quad\yqe   = \Bq \otimes' \zqe \\
\xte   &= \At \otimes  \yte, \quad \xqe   = \Aq \otimes  \yqe .
\end{aligned}
\label{growthrateextndedstate}
\end{align}
Let $x_\mr{e} = [\xte^\intercal\;\xqe^\intercal]^\intercal$, $y_\mr{e} = [\yte^\intercal\;\yqe^\intercal]^\intercal$, $z_\mr{e} = [\zte^\intercal\;\zqe^\intercal]^\intercal$, and $s = [\one_{n_\mr{t}}^{\intercal}~\zero_{n_\mr{q}}^{\intercal}]^{\intercal}$. Define, $x_{\mr{e},\lambda}=x_\mr{e}-s\lambda$ and $A_\lambda=\begin{bmatrix}
    A_{\mr{t},\lambda}&\bm{\varepsilon}\\
    \bm{\varepsilon}&A_\mr{q}
\end{bmatrix}$, where $[A_{\mr{t},\lambda}]_{i_\mr{t}j_\mr{t}}=A_{i_\mr{t}j_\mr{t}}-\lambda,\;\forall i_\mr{t}\in \overline{n_\mr{t}},\;\forall j_\mr{t}\in \overline{m_\mr{t}}$. Then, from \eqref{growthrateextndedstate} 
\begin{align*} 
	z_e &= C \cdot x_{\mr{e},\lambda} + D \cdot (x_{\mr{e},\lambda}+s\lambda) \\
	&= D \cdot s \lambda  + (C + D) \cdot x_{\mr{e},\lambda}\\
	y_\mr{e} &= B \otimes' z_\mr{e} \\
	x_{\mr{e},\lambda} &= A_{\lambda} \otimes  y_\mr{e} .
\end{align*}
Let $w_e = z_e - D \cdot s \lambda$. Then
\begin{align} 
	w_\mr{e} &= (C+D) \cdot x_{\mr{e
},\lambda} \notag\\
	y_\mr{e} &= B \otimes' (z_\mr{e} - D \cdot s \lambda + D \cdot s \lambda )\notag\\
	y_\mr{e} &= B \otimes' (w_\mr{e}  + D \cdot s \lambda )\label{implicitalter}\\
	x_{\mr{e},\lambda} &= A_\lambda \otimes  y_e .\notag
\end{align}
Equation \eqref{implicitalter} can be written as
\begin{align*}
  y_\mr{e} &= (B + \lambda\one_m\cdot d^\intercal)\otimes'w_\mr{e},
\end{align*}
 where $d=D\cdot s$. Let $B' = B + \lambda\one_m\cdot d^\intercal$. Then
\begin{align} 
\begin{aligned}
	w_\mr{e} &= (C+D) \cdot x_{\mr{e},\lambda} \\
	y_\mr{e} &= B' \otimes' w_\mr{e} \\
	x_\mr{e} &= A_\lambda \otimes  y_\mr{e} .
    \end{aligned}
    \label{extendedfixedpoint}
\end{align}

Now define $X=\text{diag}_{\otimes}(x_{\mr{e},\lambda})$, $Y=\text{d}_{\otimes}(y_\mr{e})$, $Y'=\text{d}_{\otimes'}(y_\mr{e})$ and $W'=\text{d}_{\otimes'}(w_\mr{e})$.
Then 
\begin{align}
\begin{aligned}
	0 &= Y'^{-1} \otimes' y_\mr{e}\\
	&= Y'^{-1} \otimes' B' \otimes' W' \otimes' W'^{-1} \otimes' w_\mr{e} \\
	&= Y'^{-1} \otimes' B' \otimes' W' \otimes' 0 \\	
	&= \tilde{B} \otimes' 0\\
	0 &= X^{-1} \otimes x_{\mr{e},\lambda} \\
	&= X^{-1} \otimes A_\lambda \otimes Y \otimes Y^{-1} \otimes y_\mr{e} \\
	&= X^{-1} \otimes A_\lambda \otimes Y \otimes 0 \\	
	&= \tilde{A} \otimes 0 .
 \end{aligned}
 \label{elaboration_norm}
\end{align}
Therefore, the normalized MMPS system is
\begin{align}
    \begin{aligned}
\tilde{x}(k) &= \tilde{A} \otimes \big(\tilde{B}\otimes'(C\cdot \Tx(k-1)+D\cdot  \tilde{x}(k))\big),
\end{aligned}
\label{normalizedmmps}
\end{align}
where $\tilde{x}(k)$ is the normalized state. This system has a growth rate equal to $0$ and a fixed point equal to $\zero$. 
From \eqref{elaboration_norm}, we have $\zero = (C+D) \cdot\zero$, $	\zero = \tilde{B} \otimes' \zero $, and $	\zero = \tilde{A}  \otimes  \zero$. 
This means that $[\tilde{B}]_{j,\ell} \geq \zero$ and $[\tilde{A}]_{i,j} \leq \zero$. This shows that both $\tilde{A}$ and $\tilde{B}$ have a specific structure:
Each row of $\tilde{A}$ has at least one zero element, and all nonzero elements are less than zero; similarly, each row of $\tilde{B}$ has at least one zero element and all nonzero elements are greater than zero. So, from \eqref{elaboration_norm} 
\begin{align*} 
		[y]_j - [B']_{j\ell} - [w]_\ell &\leq 0, \\
		-[x]_i + [A]_{ij} -[s\lambda]_i+ [y]_j &\leq 0,
\end{align*} 
where $[B']_{j\ell} = [B]_{j\ell} + [d]_\ell\lambda$ and
    $d=D\cdot s$ as defined above. Then,
\begin{align*}
    [y]_j - [B]_{j\ell} -[d]_\ell\lambda- [w]_\ell &\leq 0 .
\end{align*}

The original MMPS states are related to the normalized states as follows: 
\begin{align}
\label{norm_orig_dependency}
\begin{aligned}
x(k) &= \tilde{x}(k) + (k \lambda)s + x_\mr{e},\\
y(k) &= \tilde{y}(k) + (k \lambda)s + y_\mr{e}, \\
z(k) &= \tilde{z}(k) + (k \lambda)s + z_\mr{e}  .
\end{aligned}
\end{align}
Let there be $S$ temporal growth rates for the implicit MMPS system, and let $\theta\in \{1,\hdots,S\}$. Then, we define a pair of footprint matrices $(G_{\mr{A}_{\theta}}, G_{\mr{B}_{\theta}})$ as follows:
\begin{align}
\label{footprintmat}
&~~~~G_{\mr{A}_{\theta}} = \begin{bmatrix}
G_{\mr{A}_{\mathrm{t}\theta}}& \mbox{\large$0$}\\
         \mbox{\large$0$} & G_{\mr{A}_{\mathrm{q}\theta}}
    \end{bmatrix},~~~~~~ G_{\mr{B}_{\theta}} = \begin{bmatrix}
        G_{\mr{B}_{\mathrm{t}\theta}}& \mbox{\large$0$}\\
         \mbox{\large$0$} &G_{\mr{B}_{\mathrm{q}\theta}}
    \end{bmatrix},\\
&[G_{\mr{A}_{\mathrm{t}\theta}}]_{ij} = \begin{cases} 1 ~ \text{if} ~[\tilde{A}_{\mathrm{t}\theta}]_{ij} = 0 \\ 0 ~\text{if}~[\tilde{A}_{\mathrm{t}\theta}]_{ij} < 0, \end{cases}\!\!\!\!\!\!
[G_{\mr{B}_{\mathrm{t}\theta}}]_{jl} \!=\! \begin{cases} 1~\text{if}~[\tilde{B}_{\mathrm{t}\theta}]_{jl} = 0 \\ 0 ~\text{if}~[\tilde{B}_{\mathrm{t}\theta}]_{jl} > 0, \end{cases}\notag\\
&[G_{\mr{A}_{\mathrm{q}\theta}}]_{rs} \!=\! \begin{cases} 1~\text{if} ~[\tilde{A}_{\mathrm{q}\theta}]_{rs}\! =\! 0 \\ 0 ~\text{if}~[\tilde{A}_{\mathrm{q}\theta}]_{rs}\! <\! 0,  \end{cases}\!\!\!\!\!\!
[G_{\mr{B}_{\mathrm{q}\theta}}]_{st}\! =\! \begin{cases} 1~\text{if}~[\tilde{B}_{\mathrm{q}\theta}]_{st} = 0 \\ 0 ~\text{if}~[\tilde{B}_{\mathrm{q}\theta}]_{st} > 0, \end{cases}\notag
\end{align}
where $i\in \overline{n_\mr{t}},\; j\in \overline{m_\mr{t}},\; l\in \overline{p_\mr{t}},\; r\in \overline{n_\mr{q}},\; s\in \overline{m_\mr{q}},\; t\in \overline{p_\mr{q}}$ and \mbox{\large$0$} denotes the zero matrix of appropriate size. Each pair of footprint matrix defines the location of zero entries in a normalized system. Hence, given $A$ and $B$ matrices of an MMPS system \eqref{abcmmps}, we can generate all possible combinations of footprint matrix pairs with exactly one `$1$' in each row of $G_{\mr{A}_\theta}$ and $G_{\mr{B}_\theta}$. Using the footprint matrix pairs \eqref{footprintmat}, we can generate all possible linear programming problems: $\forall i\in \overline{n}, \forall j \in \overline{m}, \forall \ell\in \overline{p}$,  the corresponding LPP is
\begin{align}
\label{alg1imp}
\begin{aligned}
\min_{x_\mr{e},y_\mr{e},w_{\mr{e}}} &\;\;\lambda\\
\text{s.t.}~~-&[s\lambda]_i-[x]_i + [y]_j  \leq -[A]_{ij}   ~~\text{if}~[G_{\mr{A}_\theta}]_{ij}=0\\ 
-&[s\lambda]_i-[x]_i+[y]_j =-[A]_{ij}  ~~\text{if}\;[G_{\mr{A}_\theta}]_{ij}=1\\
&[y]_j -[d]_\ell\lambda- [w]_\ell  \leq [B]_{j\ell}  ~~~\text{if} ~[G_{\mr{B}_\theta}]_{j\ell} =0 \\
&[y]_j -[d]_\ell\lambda- [w]_\ell  = [B]_{j\ell} ~~~\text{if} ~ [G_{\mr{B}_\theta}]_{j\ell} =1 \\
&d =D\cdot s,\;\;w = (C+D)\cdot x.
\end{aligned}
\end{align}
The inequality constraints guarantee that the entries of the normalized matrices satisfy  $[\TA_\theta]_{ij}<0$ when $[G_{A_\theta}]_{ij}=0$ and $[\TB_\theta]_{ij}>0$ when $[G_{B_\theta}]_{ij}=0$. The equality constraints make sure that at least one entry of each row of the normalized matrices $\TA, \; \TB$ is equal to $0$.  Each finite value of the $A$ and $B$ matrices give rise to either an equality constraint or an inequality constraint. A footprint matrix pair $(G_{A_\theta},G_{B_\theta})$ decides which of the finite entries in each row of $A$ and $B$ matrices give rise to the equality constraint. So, the total number of possible footprint matrix pairs depend on the number of finite entries in the $A$, and $B$ matrices. The number of LPPs is equal to the total number of possible footprint matrix pairs, which is less than or equal to $m_\mr{t}^{n_\mr{t}}p_\mr{t}^{m_\mr{t}}m_\mr{q}^{n_\mr{q}}p_\mr{q}^{m_\mr{q}}$. This is the same computational complexity as finding the growth rate and fixed points of an explicit MMPS system in \citet{mmpsWODES2024}.

The size of each LPP increases in a quadratic manner as the dimensions of system matrices $A, B, C, D$ increases. However, if any element in $A$ is $\varepsilon$ or $B$ is $\top$, the corresponding constraints can be omitted from the LPP as they give rise to constraint inequalities with $\top$ on the right-hand side of the equation. This reduces the computational complexity.
In particular, many practical systems (see the case study in Section \ref{sec:urban} for example) have a lot of $\varepsilon$ and $\top$ components in the matrices $A$ and $B$, respectively. Therefore, the number of possible footprint matrix pairs and hence the number of LPPs that give a solution drastically reduces. 
\begin{remark}
\label{lppred}
    Let $a_i$,
$i\in \overline{n}$ and $b_j$, $j\in \overline{m}$ be the number of finite elements in the $i$-th row of $A$ and $j$-th row of $B$, respectively. Then the number of possible footprint matrix pairs (i.e. the number of LPPs) is 
$\Pi_{i=1}^{n}a_i \cdot \Pi_{j=1}^{m}b_j$. This is still a conservative estimate, as not all LPPs yield a feasible growth rate. However, these results are a useful baseline, and we are actively exploring approaches to reduce the number of LPPs that gives a feasible solution.
\end{remark}

Let $(\lambda^*,v^*=[{x^*}^\intercal\;{y^*}^\intercal\;{w^*}^\intercal]^\intercal)$ be one of the solutions of problem \eqref{alg1imp}. The fixed point $v^*$ may not be the only solution that satisfies the inequality and the equality constraints of \eqref{alg1imp}. By substituting the value $\lambda^*$ in \eqref{alg1imp} we get a system of equality and inequality constraints, which is denoted as follows
\begin{align}
    H_{\text{eq}}\cdot v = h_{\text{eq}},\quad H_{\text{ineq}}\cdot v \leq h_{\text{ineq}}.
    \label{fixed pointfeasiblity}
\end{align}
For a specific $\lambda^*$, there could be a set of feasible solutions for \eqref{fixed pointfeasiblity}. Note that the matrix $H_{\text{eq}}$ is a square matrix, and the rank of $H_{\text{eq}}$ will always be at least one less than $n+m+p$. This is because the fixed point is shift-invariant in the direction of vector $s=[\one_{n_\mr{t}}^{\intercal}~\zero_{n_\mr{q}}^{\intercal}]^{\intercal}$.  This is directly related to the time-invariance property of the matrices $C$ and, $D$ as in  \eqref{eq:propC}. When the rank of the matrix $H_{\text{eq}}$ is less than $n+m+p-1$, there are more directions where the fixed points are shift-invariant. Hence, we get a set of fixed points corresponding to each $\lambda^*$. Let $v^*$ be a solution of \eqref{fixed pointfeasiblity} and let $\hat{s}_1,\hat{s}_2\hdots,\hat{s}_f\in \R^{n+m+p}$ be the set of extended direction vectors. As one of the direction vectors will always be $s$, let $\hat{s}_1$ be the extended direction vector corresponding to $s$. Then from \eqref{extendedfixedpoint}, $$\hat{s}_1 = [s^\intercal \;y_\mr{s}^\intercal\;w_\mr{s}^\intercal]^\intercal, y_\mr{s}= B\otimes' (C+D)\cdot s, \;w_\mr{s}=(C+D)\cdot s.$$ Then, the extended fixed points set $\mathcal{V}_{\lambda^*}$ of the system for a growth rate $\lambda^*$ is 
\begin{align}
    \mathcal{V}_{\lambda^*} &= \{v|H_{\text{ineq}}\cdot v\leq h_{\text{ineq}}\},
\end{align}
where $v=v^*+\sigma_1\hat{s}_1+\sigma_2\hat{s}_2+\hdots\sigma_f\hat{s}_f$. Then the fixed point set is the set of sub-vectors $x\in \R^n$ that consist of only the first $n$ components of $v\in \R^{n+m+p}$. Also, $$x = x^*+\sigma_1s+\sigma_2s_2+\hdots\sigma_fs_f,$$ where $s,s_2,\hdots,s_f\in \R^n$ are the direction vectors, which are the sub-vectors that consist of only the first $n$ components of the extended direction vectors $\hat{s}_1,\hat{s}_2,\hdots,\hat{s}_f\in \R^{n+m+p}$.  
As the fixed point is shift-invariant in the direction of $s$, the scaling factor $\sigma_1$ is unconstrained. This means that any vector in the direction of $s$ is a fixed point. This result directly follows from the time-invariance property. However, all the other vector directions $s_i,\; i>1$ may not be shift-invariant. Therefore, the scaling factors $\sigma_i,\; i>1$ may be constrained. The range of $\sigma_i,\; i>1$ can be found using the inequality constraints. In summary, the set of fixed points corresponding to a specific growth rate $\lambda^*$ will be a polyhedron, which is unconstrained in the direction of $s$.

\section{Local behavior of MMPS systems around fixed points}
\label{sec4}
In this section, the local behavior of an implicit MMPS system is examined around a fixed point that corresponds to a specific growth rate.
\begin{assumption}
\label{ass2}
   Throughout this section, we assume that there exists a permutation matrix $T$ such that
the matrix $F$ as defined in \eqref{eq:G} is a strictly lower-triangle matrix (see Theorem \ref{th:unique}).
\end{assumption}
Since events within a cycle `$k$' follow a specific order in many practical applications—including urban railway systems (see Section \ref{sec:urban}), manufacturing processes, and logistics—solvability is ensured. Hence, Assumption \ref{ass2} is well justified.

Consider an implicit system with multiple growth rates $\lambda_\theta$ represented by a set of 
 associated normalized MMPS systems
\begin{align}
\begin{aligned}
\tilde{x}_{\theta}(k) &= \tilde{A}_{\theta} \otimes \big(\tilde{B}_{\theta}\otimes'(C\cdot \Tx_\theta(k-1)+D\cdot  \tilde{x}_{\theta}(k))\big)
\end{aligned}
    \label{tildesysimp}
\end{align}
for ${\theta}\in\{1,\hdots,S\}$, and $\TA_{\theta}\in \rrr^{n\times m}$, $\TB_{\theta} \in \rrr^{m\times p}$, $C, D\in\rrr^{p\times n}$. Note that the normalized system matrices $\tilde{A} = X^{-1} \otimes A_\lambda \otimes Y ,\; \tilde{B}=Y'^{-1} \otimes' B' \otimes' W'$ are shift-invariant in the direction of vector $s=[\one_{n_\mr{t}}^{\intercal}~\zero_{n_\mr{q}}^{\intercal}]^{\intercal}$. Therefore, for a growth rate $\lambda_\theta$ and the set of fixed points $x_\mr{e}+hs,\;h\in \rrr$, we get the same normalized system.

Now, consider a region $\Omega_{\theta}$ containing all the vectors $x, w\in \rrr^n$ such that
\begin{align}
\label{omegaimp}
\TA_{\theta}\!\otimes\!\big(\TB_{\theta}\otimes'(C\!\cdot\! x\!+\!D\!\cdot\! w)\big)\! =\! G_{\mr{A}_{\theta}}\!\cdot\! G_{\mr{B}_{\theta}}\!\cdot\! (C\!\cdot\! x\!+\!D\!\cdot\! w).
\end{align}
This means that $y:=\TB_{\theta}\otimes'(C\cdot x+D\cdot w)$ selects the $i_{0,j}$-th component of the vector $(C\cdot x+D\cdot w)$ where $i_{0,j}$ denotes the position of a zero\footnote{There can be multiple zeros in same row of $A$ (or $B$). Please refer to Remark \ref{multizero} to see what happens in this case.} in row $j$ of $\TB_{\theta}$, and $\TA_{\theta}\otimes y$ selects the $l_{0,i}$-th component of the vector $y$ where $l_{0,i}$ denotes the position of a $0$ in row, $i$, of $\TA_{\theta}$. Note that $\Omega_\theta$ is a polyhedral region as \eqref{omegaimp} can be reduced to a system of inequalities. 
\begin{proposition}
\label{prop:linsys}
    Any normalized implicit MMPS system can be reformulated as a linear system in conventional algebra for all $\tilde{x}_{\theta}(k)\in \Omega_{\theta},\;k\in \zzz^+$ as follows:
    \begin{align}\label{linearsysimp}
        \begin{aligned}
            \Tilde{x}_{\theta}(k) &= M_{\theta}\cdot\Tilde{x}_{\theta}(k-1)\\
            M_{\theta}&=(I-M_1)^{-1}\cdot M_2 \\
            M_1 &=  G_{\mr{A}_{\theta}}\cdot G_{\mr{B}_{\theta}}\cdot D\\
            M_2 &=  G_{\mr{A}_{\theta}}\cdot G_{\mr{B}_{\theta}}\cdot C
        \end{aligned}
    \end{align}
    provided the inverse $(I-M_1)^{-1}$ exists. Hence, the linearized implicit MMPS system can be written into an explicit form. 
\end{proposition}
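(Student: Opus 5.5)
The plan is to use the defining property of the region $\Omega_\theta$ in \eqref{omegaimp} to replace the max-min operations by a fixed linear selection, and then solve the resulting implicit linear equation for $\tilde{x}_\theta(k)$.

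First I fix the interpretation of the footprint matrices. For each row $i$ of $\tilde{A}_\theta$ I take exactly one zero entry, at position $l_{0,i}$, and for each row $j$ of $\tilde{B}_\theta$ exactly one zero entry, at position $i_{0,j}$, as in the construction of the LPPs \eqref{alg1imp}. With this choice, $G_{\mr{A}_\theta}$ and $G_{\mr{B}_\theta}$ are selection matrices with a single $1$ per row, so $G_{\mr{A}_\theta}\cdot v$ picks the $l_{0,i}$-th component of $v$ in row $i$. If a row contains several zeros, I fix one of them, following Remark~\ref{multizero}.

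Next I consider a trajectory of the normalized system \eqref{tildesysimp}. I set $x=\tilde{x}_\theta(k-1)$ and $w=\tilde{x}_\theta(k)$. Whenever the pair $(x,w)$ lies in $\Omega_\theta$ (which is how I read the hypothesis $\tilde{x}_\theta(k)\in\Omega_\theta$), identity \eqref{omegaimp} lets me replace the right-hand side of \eqref{tildesysimp}:
\begin{align*}
\tilde{x}_\theta(k) = G_{\mr{A}_\theta}\cdot G_{\mr{B}_\theta}\cdot\big(C\cdot\tilde{x}_\theta(k-1)+D\cdot\tilde{x}_\theta(k)\big) = M_2\cdot\tilde{x}_\theta(k-1)+M_1\cdot\tilde{x}_\theta(k).
\end{align*}
This uses only linearity of the conventional matrix product and the definitions of $M_1$ and $M_2$. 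Rearranging gives
\begin{align*}
(I-M_1)\cdot\tilde{x}_\theta(k)=M_2\cdot\tilde{x}_\theta(k-1).
\end{align*}
Since $(I-M_1)^{-1}$ exists by assumption, I can multiply on the left to get $\tilde{x}_\theta(k)=M_\theta\cdot\tilde{x}_\theta(k-1)$. This is an explicit linear recursion, which is exactly \eqref{linearsysimp}.

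I would then add a remark on when invertibility is automatic. Under Assumption~\ref{ass2}, the sparsity pattern of $M_1$ is contained in that of $S_A\cdot S_B\cdot S_D$, because $G_{\mr{A}_\theta}\le S_A$ and $G_{\mr{B}_\theta}\le S_B$ entrywise. Hence $T\cdot M_1\cdot T^{-1}$ is strictly lower triangular, so $M_1$ is nilpotent and $(I-M_1)^{-1}=\sum_{r=0}^{n-1}M_1^r$ exists.

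The main obstacle is interpretive rather than computational. The region $\Omega_\theta$ is defined on pairs $(x,w)$, so the step that requires care is justifying that \eqref{omegaimp} holds at $(\tilde{x}_\theta(k-1),\tilde{x}_\theta(k))$. I would state the proposition's hypothesis precisely in that form, so that the substitution above is valid.
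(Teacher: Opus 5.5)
Your proof is correct and is the same substitution the paper intends when it says the result follows directly from \eqref{tildesysimp} and \eqref{omegaimp}: evaluating \eqref{omegaimp} at the pair $(\tilde{x}_\theta(k-1),\tilde{x}_\theta(k))$ gives $(I-M_1)\cdot\tilde{x}_\theta(k)=M_2\cdot\tilde{x}_\theta(k-1)$, and stating the hypothesis as a condition on that pair is the right way to make the paper's loose ``$\tilde{x}_\theta(k)\in\Omega_\theta$'' precise. Your remark on invertibility corresponds to the paper's next proposition, and your nilpotency (Neumann-series) argument is a slightly cleaner version of its full-rank argument.
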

\begin{proof}
    The proof follows directly from \eqref{tildesysimp} and \eqref{omegaimp}.
\end{proof}
\begin{proposition}
When there exists a permutation matrix $T$ such that $F=T \cdot S_A \cdot S_B \cdot S_D \cdot T^{-1}$ is a strictly lower-triangle matrix, 
the inverse $(I-M_1)^{-1}$ always exists.
\end{proposition}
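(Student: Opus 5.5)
Our plan is to show that $M_1 = G_{\mr{A}_\theta}\cdot G_{\mr{B}_\theta}\cdot D$ is nilpotent. Once that holds, $I-M_1$ has all eigenvalues equal to $1$, so $\det(I-M_1)=1\neq 0$ and the inverse exists. In fact it is the finite Neumann sum $(I-M_1)^{-1}=\sum_{r=0}^{n-1}M_1^r$. To get nilpotency, we will compare the sparsity pattern of $M_1$ with that of $S_A\cdot S_B\cdot S_D$. We then use the permutation $T$ from the hypothesis, which is the same $T$ as in Theorem \ref{th:unique} and Assumption \ref{ass2}.

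\textbf{Step 1: support of $M_1$.} We check that the footprint matrices are supported inside the structure matrices. An entry $[G_{\mr{A}_\theta}]_{ij}=1$ means $[\TA_\theta]_{ij}=0$, which is finite. Since $\TA_\theta = X^{-1}\otimes A_\lambda\otimes Y$ with $X,Y$ diagonal with finite diagonal, $[\TA_\theta]_{ij}$ is finite exactly when $[A]_{ij}\neq\eps$. Hence $[G_{\mr{A}_\theta}]_{ij}\le [S_A]_{ij}$.

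The same argument applies to $B$. Here $\TB_\theta = Y'^{-1}\otimes' B'\otimes' W'$ with $B'=B+\lambda\one\cdot d^\intercal$, and this shift preserves finiteness. So $[G_{\mr{B}_\theta}]_{j\ell}\le[S_B]_{j\ell}$.

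For $D$, we have entrywise $|[D]_{\ell q}|\le \|D\|_{\max}[S_D]_{\ell q}$. Since all factors except $D$ are nonnegative, this gives the entrywise bound
\[
|M_1| \le G_{\mr{A}_\theta}\cdot G_{\mr{B}_\theta}\cdot|D| \le \|D\|_{\max}\, S_A\cdot S_B\cdot S_D .
\]
This uses only that products of nonnegative matrices are monotone in each factor. In particular, $[M_1]_{iq}\neq 0$ implies $[S_A S_B S_D]_{iq}>0$. Two details need care here: the convention that $S_A S_B S_D$ has nonnegative integer entries, and that $F$ is called strictly lower triangular in terms of its zero pattern.

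\textbf{Step 2: transfer the triangular structure.} Conjugation by a permutation preserves the support relation just established. Therefore $T\cdot M_1\cdot T^{-1}$ has its support contained in that of $F$, so it is strictly lower triangular. A strictly lower triangular $n\times n$ matrix $N$ satisfies $N^n=0$. Hence $M_1^n = T^{-1}(TM_1T^{-1})^nT = 0$.

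\textbf{Step 3: conclude.} We compute $\det(I-M_1)=\det(I-TM_1T^{-1})=1$, because $I-TM_1T^{-1}$ is unit lower triangular. This gives existence of $(I-M_1)^{-1}$ together with the explicit formula above. The formula also matches the successive-substitution argument in the proof of Theorem \ref{th:unique}.

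The main obstacle is Step 1. We must justify rigorously that the normalization does not create new finite entries, so that the zeros of $\TA_\theta,\TB_\theta$ sit inside the supports of $A,B$. We must also keep signed entries of $D$ from spoiling the comparison, which is why we pass to $|D|$ and an entrywise inequality. If finiteness of $X,Y,W'$ were in doubt, we would first invoke regularity of $A,B$ and finiteness of the fixed point obtained from the LPP \eqref{alg1imp}.
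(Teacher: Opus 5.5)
Your proposal is correct and follows essentially the same route as the paper. Both arguments show that the support of $M_1=G_{\mr{A}_\theta}\cdot G_{\mr{B}_\theta}\cdot D$ lies inside that of $S_A\cdot S_B\cdot S_D$, so $T\cdot M_1\cdot T^{-1}$ is strictly lower triangular and $I-M_1$ is invertible. Your version is somewhat more careful: you derive $G_{\mr{A}_\theta}\le S_A$ and $G_{\mr{B}_\theta}\le S_B$ from the finiteness-preserving normalization, which the paper only asserts. You also sharpen the paper's ``full rank'' conclusion to $\det(I-M_1)=1$ with an explicit finite Neumann series.
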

\begin{proof}
Note that the values of entries $[G_A]_{ij}$, $[S_A]_{ij}$, $[G_B]_{j\ell}$, $[S_B]_{j\ell}$, and $ [S_D]_{\ell i}$ are in $ \{0, 1\},\; i\in \overline{n}, j\in \overline{m}, \ell \in \overline{p}$.
As the matrices $G_A$, $G_B$ are constructed in such a way that each row of $G_A$, $G_B$ has exactly one 1 in it, there is
a correlation between matrices $G_A$, $G_B$ with matrices $S_A$, $S_B$. The matrices $G_A$, $G_B$ can be obtained by
eliminating a few 1's from each row of $S_A$, $S_B$, respectively. Therefore, 
\begin{align*}
    [S_A\cdot S_B]_{i\ell}\geq [G_A\cdot G_B]_{i\ell}\; \forall i,\ell
\end{align*}
and $[S_A\cdot S_B]_{i\ell} = 0 \implies [G_A\cdot G_B]_{i\ell} = 0$.
Hence, when 
\begin{align}
     &[S_A\cdot S_B \cdot S_D]_{iq} = 0,\; q\in \overline{n}\;\implies\;[S_A\cdot S_B]_{i\ell}\cdot [S_D]_{\ell q} =0 \;\forall \ell\notag\\
    &\implies \; [S_D]_{\ell q}=0, \; \forall \ell \;\text{such that}\; [S_A\cdot S_B]_{i\ell}\neq 0\notag\\
    &\implies \; [D]_{\ell q}=0,\; \forall \ell \;\text{such that}\;[S_A\cdot S_B]_{i\ell}\neq 0 \label{firsin}\\
     &\implies \; [D]_{\ell q}=0,\; \forall \ell \;\text{such that}\;[G_A\cdot G_B]_{i\ell}\neq 0\notag\\
     &\implies [G_A\cdot G_B\cdot D]_{iq} =0.\notag
\end{align}

As $T \cdot S_A \cdot S_B \cdot S_D \cdot T^{-1}$ is a strictly lower triangular matrix, from \eqref{firsin}, 
$  T \cdot G_A \cdot G_B \cdot D \cdot T^{-1} =  T \cdot M_1 \cdot T^{-1}$ is also
a strictly lower triangular matrix. This implies that the matrix 

$(I-T \cdot M_1 \cdot T^{-1}) = T \cdot (I-M_1) \cdot T^{-1}$ (Recall that $T$ is a permutation matrix) has full rank.
Since $T$ is a full rank square matrix, $(I-M_1)$ is also full rank and hence the inverse $(I-M_1)^{-1}$ exists.
\end{proof}
Note that the linearized system matrix $M_\theta$ is the same for all the vectors in the fixed point set corresponding to the growth rate $\lambda_\theta$. However, the region $\Omega_{\theta_i}$ around different fixed points $x_{\mr{e},i}$ will be different, except for the ones that are shifted along the vector $s$. For all the fixed points shifted along the vector $s$, the region $\Omega_\theta$ will be the same.   
\begin{proposition}
   Let $\Omega_{\theta_i}:=\{\Tx|H\cdot \Tx\leq h_i\}$ be the region around the fixed point $x_{\mr{e},i}$. Then, for a different fixed point $x_{\mr{e},j}$ from the set $\mathcal{V}_{\lambda_\theta}$ \eqref{fixed pointset} corresponding to the same growth rate $\lambda_\theta$, the region is $\Omega_{\theta_j}=\{\Tx|H\cdot \Tx\leq h_1+H\cdot(x_{\mr{e},i}-x_{\mr{e},j})\}$. 
\end{proposition}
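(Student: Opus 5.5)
The plan is to describe both regions in the original (un-normalized) coordinates and then carry that description back to the normalized coordinates attached to each fixed point. By \eqref{norm_orig_dependency}, around the fixed point $x_{\mr{e},i}$ the normalized state is $\Tx^{(i)}(k) = x(k) - (k\lambda_\theta)s - x_{\mr{e},i}$. Around $x_{\mr{e},j}$ it is $\Tx^{(j)}(k) = x(k) - (k\lambda_\theta)s - x_{\mr{e},j}$. Hence, for the same trajectory of the original system,
\[
\Tx^{(i)}(k) = \Tx^{(j)}(k) + (x_{\mr{e},j}-x_{\mr{e},i}).
\]

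The first key step is to argue that the region $\Omega_{\theta}$ in \eqref{omegaimp} does not depend on the fixed point. It is a statement about the original system: all original states $x$ (after removing the drift $k\lambda_\theta s$) for which the active branch of the max and min operations is the one encoded by the footprint pair $(G_{\mr{A}_\theta},G_{\mr{B}_\theta})$. I will use that the footprint pair, and hence $M_\theta$, is common to all points of $\mathcal{V}_{\lambda_\theta}$, as already noted after Proposition \ref{prop:linsys}. Writing \eqref{omegaimp} as a system of linear inequalities in the original shifted state, I obtain a set $\{x \mid H\cdot x \le \bar h\}$ in which $H$ collects the differences of rows of $C$ and $D$ (and the finite entries of $A,B$) selected by the footprint, and $\bar h$ depends only on $A$, $B$ and $\lambda_\theta$.

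Substituting $x = \Tx^{(i)} + x_{\mr{e},i}$ then gives $\Omega_{\theta_i}=\{\Tx \mid H\cdot\Tx \le \bar h - H\cdot x_{\mr{e},i}\}$, so $h_i = \bar h - H x_{\mr{e},i}$. In the same way, $h_j = \bar h - H x_{\mr{e},j} = h_i + H\cdot(x_{\mr{e},i}-x_{\mr{e},j})$, which is the claimed formula (the $h_1$ in the statement being $h_i$). As a consistency check, when $x_{\mr{e},i}-x_{\mr{e},j}=hs$, the time-invariance property \eqref{eq:propC} gives $H\cdot s=0$, because each row of $H$ is a difference of rows of $C+D$ acting on temporal components with equal row sums. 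This recovers the earlier remark that $\Omega_\theta$ is unchanged along $s$.

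The main obstacle is the first step. The normalized matrices $\TA_\theta,\TB_\theta$ are built from $x_{\mr{e},i}$, so a priori the inequality description of $\Omega_{\theta_i}$ might involve a different $H$ for each fixed point. To handle this, I will write each entry of $\TA$ and $\TB$ explicitly as $-[x_{\mr{e}}]_i+[A]_{ij}-[s\lambda]_i+[y_\mr{e}]_j$ and as $-[y_\mr{e}]_j+[B']_{j\ell}+[w_\mr{e}]_\ell$. Once the normalized variables are expanded, the fixed-point terms telescope, leaving a linear description of the active branch in $x$ itself. I will check explicitly that the coefficient matrix $H$ coming from $C$, $D$ and the footprint is identical for $i$ and $j$, and that only the right-hand side picks up the term $-H\cdot x_{\mr{e}}$.
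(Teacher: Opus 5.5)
Your argument is the paper's argument. The paper writes $x=\Tx+(k\lambda_\theta)s+x_{\mr{e},i}$ and drops $H\cdot(k\lambda_\theta)s$ because $\Omega_{\theta_i}$ is unbounded along $s$. This gives $H\cdot x\le h_i+H\cdot x_{\mr{e},i}$, and the paper then re-normalizes this same set around $x_{\mr{e},j}$. Your reading of $h_1$ as $h_i$ is right. The paper never argues that $H$ and this set in original coordinates are the same for $x_{\mr{e},j}$, so your ``first key step'' addresses something the paper takes for granted.

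As sketched, though, the telescoping does not close that step for the single-variable description of Proposition~\ref{propomegatheta}. There $\Tx(k)$ has already been eliminated, so a $U$-row of $H$ is $([G_{\mr{B}_\theta}]_j-\bm{e}_k)\cdot(C+D\cdot M_\theta)$. After the footprint equalities, however, $[\TB_\theta]_{jk}=[B]_{jk}-[B]_{j\ell_j}-([G_{\mr{B}_\theta}]_j-\bm{e}_k)\cdot z_{\mr e}$. Here $z_{\mr e}=(C+D)\cdot x_{\mr e}-\lambda_\theta C\cdot s$ and $\ell_j$ is the selected column in row $j$ of $G_{\mr{B}_\theta}$.

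Substituting $\Tx=x-x_{\mr e}$ (drift removed) therefore leaves the term $([G_{\mr{B}_\theta}]_j-\bm{e}_k)\cdot D\cdot(M_\theta-I)\cdot x_{\mr e}$ on the right-hand side. Likewise $([G_{\mr{A}_\theta}]_i-\bm{e}_j)\cdot G_{\mr{B}_\theta}\cdot D\cdot(M_\theta-I)\cdot x_{\mr e}$ appears in the $L$-rows. This term does not telescope away. It is the same for $x_{\mr{e},i}$ and $x_{\mr{e},j}$ only because $M_\theta\cdot\delta=\delta$ for $\delta=x_{\mr{e},i}-x_{\mr{e},j}$.

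That fact needs its own argument. Both points satisfy the same equalities $H_{\text{eq}}\cdot v=h_{\text{eq}}$, that is, $x=\alpha+(M_1+M_2)\cdot x-\lambda_\theta M_2\cdot s$. Here $\alpha_i=[A]_{ij_i}+[B]_{j_i\ell_{j_i}}$ and $j_i$ is the selected column in row $i$ of $G_{\mr{A}_\theta}$. Subtracting the two equations gives $(I-M_1)\cdot\delta=M_2\cdot\delta$, i.e. $M_\theta\cdot\delta=\delta$.

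A cleaner equivalent is to eliminate $x(k)$ in original coordinates using the fixed-point-free branch law $x(k)=(I-M_1)^{-1}\cdot\alpha+M_\theta\cdot x(k-1)$. The branch inequalities then read, for example, $([G_{\mr{B}_\theta}]_j-\bm{e}_k)\cdot(C+D\cdot M_\theta)\cdot x(k-1)\le[B]_{jk}-[B]_{j\ell_j}-([G_{\mr{B}_\theta}]_j-\bm{e}_k)\cdot D\cdot(I-M_1)^{-1}\cdot\alpha$, and analogously for the $A$-rows. So $\bar h$ is visibly independent of the fixed point. Note that it also depends on $C$, $D$ and the footprint, not only on $A$, $B$ and $\lambda_\theta$ as you claimed.

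With this ingredient added your proof is complete, and more careful than the paper's.
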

\begin{proof}
    Consider the region around the fixed point $x_\mr{e_i}$,
    \begin{align*}
        \Omega_{\theta_i}:=\{\Tx|H\cdot \Tx\leq h_i\}.
    \end{align*}
    From \eqref{norm_orig_dependency}, we have
    \begin{align*}
        x &= \Tx+(k\lambda_\theta)s+x_{\mr{e},i}.\\
        \text{Hence,}\;\;H\cdot\Tx &= H\cdot x-H\cdot x_{\mr{e},i}-H\cdot (k\lambda_\theta)s.
    \end{align*}
    As the region $\Omega_{\theta_i}$ is unbounded in the direction of vector $s$ (\citep{mmpsWODES2024}, Proposition 5), we have $H\cdot (k\lambda_\theta)s=\zero$. So,
    \begin{align}
        H\cdot\Tx &= H\cdot x-H\cdot x_{\mr{e},i}\leq h_i\notag\\
        \text{and thus}\;\; H\cdot x &\leq h_i+H\cdot x_{\mr{e},i}.\label{shiftedregion}
    \end{align}
    Now consider the region $\Omega_{\theta_j}$ for the fixed point $x_{\mr{e},j}$  
    \begin{align*}
        H\cdot \Tx = H\cdot x-H\cdot x_{\mr{e},j}.
    \end{align*}
    From \eqref{shiftedregion}, we get
    \begin{align*}
        H\cdot x-H\cdot x_{\mr{e},j}\leq h_i+H\cdot x_{\mr{e},i}-H\cdot x_{\mr{e},j}.
    \end{align*}
\end{proof}
\begin{proposition}
\label{propomegatheta}
    The region $\Omega_{\theta}$ associated to the implicit linearized system \eqref{linearsysimp} for a fixed point $x_\mr{e}$ is given by the system of inequalities
    \begin{align*}
           &H\cdot \Tx\leq h,\;\; H = \begin{bmatrix}
               U\\-L
           \end{bmatrix},\;\; h = \begin{bmatrix}
               \tilde{b}\\ -\tilde{a}
           \end{bmatrix}\\
           &U \!=\!\big((G_{\mr{B}_{\theta}}\boxtimes\one_{p})-(\one_{m}\boxtimes I_{p})\big)\cdot (C+D\cdot M_\theta)\\ 
           &L \!=\! \big((G_{\mr{A}_{\theta}}\boxtimes\one_{m})-(\one_{n}\boxtimes I_{m})\big)\!\cdot\! G_{\mr{B}_{\theta}}\!\cdot\! (C+D\cdot M_\theta)\\
           &\tilde{b}\!=\!\text{vec}(\TB_{\theta}),\quad\tilde{a}\! =\!
               \text{vec}(\TA_{\theta}),
    \end{align*}
    where $\Tx\in\rrr^n$,  $\boxtimes$ is the Kronecker product (see Definition \ref{kronecker}) and $\text{vec}(\cdot)$ is the vector constructed in the row major stacking (see Definition \ref{rowmajor}) of the matrix.
\end{proposition}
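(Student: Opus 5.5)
We unpack the defining condition \eqref{omegaimp} of $\Omega_\theta$ into linear inequalities. We then substitute the linearized dynamics of Proposition \ref{prop:linsys} so that the implicit argument $w=\Tx_\theta(k)$ is expressed through $\Tx:=\Tx_\theta(k-1)$. Along the trajectory of \eqref{linearsysimp} we have $\Tx_\theta(k)=M_\theta\cdot\Tx$, so the vector fed to the min-plus layer is
\[
\zeta := C\cdot\Tx + D\cdot M_\theta\cdot \Tx = (C+D\cdot M_\theta)\cdot\Tx .
\]
This is well defined under Assumption \ref{ass2}, since $(I-M_1)^{-1}$ exists by the preceding proposition. Condition \eqref{omegaimp} then says that both nested selections are attained at the positions marked by the footprint matrices. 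We treat the two layers separately.

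\textbf{Min-plus layer.} We require $y:=\TB_\theta\otimes'\zeta = G_{\mr{B}_\theta}\cdot\zeta$. For each row $j\in\overline{m}$, the selected entry $[G_{\mr{B}_\theta}\cdot\zeta]_j$ (which carries $[\TB_\theta]_{j,\ell_0}=0$) must not exceed $[\TB_\theta]_{j\ell}+\zeta_\ell$ for every $\ell\in\overline{p}$. That is,
\[
[G_{\mr{B}_\theta}\cdot\zeta]_j-\zeta_\ell\le[\TB_\theta]_{j\ell}.
\]
Stacking over the pairs $(j,\ell)$ in row-major order gives $mp$ inequalities. The term $[G_{\mr{B}_\theta}\cdot\zeta]_j$, repeated for $\ell=1,\dots,p$, is exactly $(G_{\mr{B}_\theta}\boxtimes\one_p)\cdot\zeta$ (Definition \ref{kronecker}). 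The term $\zeta_\ell$ running over $\ell$ and repeated for each $j$ is $(\one_m\boxtimes I_p)\cdot\zeta$, and the right-hand side is $\text{vec}(\TB_\theta)=\tilde b$. Substituting $\zeta$ yields $U\cdot\Tx\le\tilde b$. Entries $[\TB_\theta]_{j\ell}=\top$ give vacuous constraints and can be dropped.

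\textbf{Max-plus layer.} We require $\TA_\theta\otimes y=G_{\mr{A}_\theta}\cdot y$ with $y=G_{\mr{B}_\theta}\cdot\zeta$. For each $i\in\overline{n}$ and $j\in\overline{m}$ this means
\[
[\TA_\theta]_{ij}+y_j\le[G_{\mr{A}_\theta}\cdot y]_i ,
\]
which rearranges to $-\big([G_{\mr{A}_\theta}\cdot y]_i-y_j\big)\le-[\TA_\theta]_{ij}$. Stacking in row-major order, the same Kronecker bookkeeping with $(n,m)$ in place of $(m,p)$ gives
\[
-\big((G_{\mr{A}_\theta}\boxtimes\one_m)-(\one_n\boxtimes I_m)\big)\cdot G_{\mr{B}_\theta}\cdot(C+D\cdot M_\theta)\cdot\Tx\le-\tilde a ,
\]
that is, $-L\cdot\Tx\le-\tilde a$. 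Combining both layers gives $H\cdot\Tx\le h$.

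The converse direction, that these inequalities imply \eqref{omegaimp}, holds by the same pointwise equivalences. This uses that each row of $G_{\mr{A}_\theta}$ and $G_{\mr{B}_\theta}$ has exactly one 1, located at a zero entry of $\TA_\theta$ or $\TB_\theta$ respectively, so the selected term is actually one of the candidates in the max or min.

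The main obstacle is the index bookkeeping: we must check that the row-major stacking and the two Kronecker conventions of Definition \ref{kronecker} align entrywise with $\text{vec}(\TB_\theta)$ and $\text{vec}(\TA_\theta)$. A second point needing justification is the replacement of $w$ by $M_\theta\Tx$. This is consistent only because the region is defined through the solution of the implicit equation, which is unique by Theorem \ref{th:unique}. We should state explicitly that the implicit fixed-point relation is resolved by $\Tx_\theta(k)=M_\theta\Tx$ inside $\Omega_\theta$.
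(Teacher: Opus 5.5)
Your proposal is correct and follows essentially the same route as the paper. Like the paper, you split \eqref{omegaimp} into the min-plus selection $\TB_\theta\otimes'\tilde z=G_{\mr{B}_\theta}\cdot\tilde z$ and the max-plus selection $\TA_\theta\otimes\tilde y=G_{\mr{A}_\theta}\cdot\tilde y$, turn each into entrywise inequalities stacked with the Kronecker products and row-major $\text{vec}(\cdot)$, and substitute $\Tx(k)=M_\theta\cdot\Tx(k-1)$. Your treatment of the converse inclusion and of replacing $w$ by $M_\theta\Tx$ via uniqueness of the implicit solution is left implicit in the paper's proof, but it does not change the argument.
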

\begin{proof}
    The condition \eqref{omegaimp} can be reformulated as
         \begin{align}
             \TB_{\theta} \otimes' \tilde{z} &= G_{\mr{B}_{\theta}}\cdot \tilde{z}, \quad \tilde{z} = C\cdot \Tx(k-1)+D\cdot \Tx(k) \label{equvalencycond1}\\
             \TA_{\theta} \otimes \tilde{y} &= G_{\mr{A}_{\theta}} \cdot \tilde{y}, \quad \tilde{y} =G_{\mr{B}_{\theta}}\cdot \tilde{z} \label{equvalencycond2}
             \end{align}
             for some $\tilde{z}\in \rrr^p$, $\tilde{y}\in \rrr^m$, $\Tx\in \rrr^n$.
             From \eqref{equvalencycond1} we have
             \begin{align*}
             \min_{k\in \overline{p}}([\TB_{\theta}]_{jk}+\tilde{z}_k) &=[G_{\mr{B}_{\theta}}\cdot \tilde{z}]_j, \quad j\in \overline{m}.
             \end{align*}
             So, $[\TB_{\theta}]_{jk}+ \bm{e}_k \cdot \tilde{z} \geq [G_{\mr{B}_{\theta}}]_j\cdot \tilde{z},\;\forall j,k$ and thus
             \begin{align}
             ([G_{\mr{B}_{\theta}}]_j-\bm{e}_k)\cdot \tilde{z} &\leq [\TB_{\theta}]_{jk}, \quad\forall j,k.
             \label{U}
             \end{align}
             Here, $\bm{e}_k$ is the standard basis vector (see Section \ref{sec2}). Therefore, \eqref{U} can be written as
             
             $(G_{\mr{B}_{\theta}}\boxtimes\one_p-\one_m\boxtimes I_p)\cdot (C \cdot \Tx(k-1)+D\cdot \Tx(k)) \leq \text{vec}(\TB_{\theta})$
        From \eqref{linearsysimp} $\Tx(k) = M_\theta \Tx(k-1)$. Then,
        $$(G_{\mr{B}_{\theta}}\boxtimes\one_p-\one_m\boxtimes I_p)\cdot (C+D\cdot M_\theta)\cdot \Tx(k-1)) \leq \text{vec}(\TB_{\theta}).$$
        
         Similarly, from \eqref{equvalencycond2}, we have
         \begin{align}
             \max_{j\in \overline{m}}([\TA_{\theta}]_{ij}+y_j) &= [G_{\mr{A}_{\theta}}\cdot y]_{i}, \quad i\in \overline{n} \notag\\
             [\TA_{\theta}]_{ij}+\bm{e}_j\cdot y &\leq [G_{\mr{A}_{\theta}}]_i\cdot y, \quad \forall i,j \notag\\
             ([G_{\mr{A}_{\theta}}]_i-\bm{e}_j)\cdot y &\geq [\TA_{\theta}]_{ij}, \quad \forall i,j.
             \label{L}
         \end{align}
         Equation \eqref{L} can be written as
         
             $(G_{\mr{A}_{\theta}}\boxtimes\one_m-\one_n\boxtimes I_m)\cdot G_{\mr{B}_{\theta}}\cdot (C+D\cdot M_\theta) \cdot \Tx \geq \text{vec}(\TA_{\theta})$.
\end{proof}
\begin{remark}
\label{multizero}
    When there are multiple zeros in the same row of $\TA_\theta$ ($\TB_\theta$), there exists multiple footprint matrix pair corresponding to a growth rate $\lambda_\theta$ that can be obtained from \eqref{alg1imp}. The existence of multiple footprint matrix points to the possibility of different polyhedral regions around the fixed point. In this case, the fixed point lies on the boundary of these regions.
\end{remark}

\begin{proposition}
    The linearized implicit system matrix $M_\theta$ has the following properties:
    \begin{align*}
        M_\theta\cdot s = s, \;\;\text{with} \;\; s=[\one_{n_{\mr{t}}}^{\intercal} ~ \zero_{n_{\mr{q}}}^{\intercal}]^{\intercal}.
    \end{align*}
\end{proposition}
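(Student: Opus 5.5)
The plan is to avoid working with the inverse $(I-M_1)^{-1}$ directly. Instead I reduce the claim to an identity involving only $G_{\mr{A}_\theta}$, $G_{\mr{B}_\theta}$ and $C+D$. By Assumption \ref{ass2} and the preceding proposition, $(I-M_1)$ is invertible. Hence
\[
M_\theta\cdot s = s \iff M_2\cdot s = (I-M_1)\cdot s \iff G_{\mr{A}_\theta}\cdot G_{\mr{B}_\theta}\cdot (C+D)\cdot s = s .
\]
So it suffices to compute the three products $(C+D)\cdot s$, $G_{\mr{B}_\theta}\cdot(\cdot)$ and $G_{\mr{A}_\theta}\cdot(\cdot)$ one after the other.

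\emph{Step 1: $(C+D)\cdot s$.} Since $s=[\one_{n_\mr{t}}^\intercal~\zero_{n_\mr{q}}^\intercal]^\intercal$, the product $(C+D)\cdot s$ involves only the temporal columns. It equals the stacked vector of $(C_{11}+D_{11})\cdot\one_{n_\mr{t}}$ and $(C_{21}+D_{21})\cdot\one_{n_\mr{t}}$. Proposition \ref{propC} (time invariance) says that the $\ell$-th row sums of $C_{11}$ and $D_{11}$ together equal $1$, and that the corresponding sums for $C_{21}$, $D_{21}$ equal $0$. This gives
\[
(C+D)\cdot s = [\one_{p_\mr{t}}^\intercal~\zero_{p_\mr{q}}^\intercal]^\intercal .
\]
The one point needing care is reading \eqref{eq:propC} correctly: the sum runs over the concatenated row $[C_{11}\;D_{11}]_{\ell,\cdot}$ restricted to temporal indices. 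That is exactly the $\ell$-th component of $(C_{11}+D_{11})\one$, which is the quantity arising from the homogeneity argument in the proof of Proposition \ref{propC}.

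\emph{Steps 2 and 3: the footprint matrices.} By \eqref{footprintmat}, $G_{\mr{B}_\theta}$ is block diagonal with blocks $G_{\mr{B}_{\mr{t}\theta}}$ and $G_{\mr{B}_{\mr{q}\theta}}$. Each row contains exactly one $1$ and otherwise zeros. A $0/1$ matrix with exactly one $1$ per row maps $\one$ to $\one$ and $\zero$ to $\zero$, so
\[
G_{\mr{B}_\theta}\cdot[\one_{p_\mr{t}}^\intercal~\zero_{p_\mr{q}}^\intercal]^\intercal=[\one_{m_\mr{t}}^\intercal~\zero_{m_\mr{q}}^\intercal]^\intercal .
\]
The same argument for the block-diagonal $G_{\mr{A}_\theta}$ gives
\[
G_{\mr{A}_\theta}\cdot[\one_{m_\mr{t}}^\intercal~\zero_{m_\mr{q}}^\intercal]^\intercal = s .
\]
Combining the three steps yields $M_2\cdot s=(I-M_1)\cdot s$, and therefore $M_\theta\cdot s=s$.

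I expect no real obstacle here. The only delicate point is the block structure. The temporal/quantity separation of $A$ and $B$ in \eqref{eq:TQABCDsys}, inherited by the footprint matrices, is what keeps the zero quantity part from mixing with the temporal ones. I should also note the multiple-zero case of Remark \ref{multizero}. There the footprint pair is still chosen with exactly one $1$ per row, so the argument holds for each admissible pair.
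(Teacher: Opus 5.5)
Your proof is correct and follows the same route as the paper. Both arguments reduce the claim to $(M_1+M_2)\cdot s = G_{\mr{A}_\theta}\cdot G_{\mr{B}_\theta}\cdot (C+D)\cdot s = s$, using Proposition \ref{propC} and the exactly-one-1-per-row property of the footprint matrices, and then invoke invertibility of $I-M_1$. Your write-up is actually cleaner than the paper's: multiplying directly by $(I-M_1)$ avoids its garbled manipulation with $X$, which should read $X=(I-M_1)^{-1}$. You also state explicitly two points the paper glosses over: $(C+D)\cdot s$ is the $p$-dimensional vector $[\one_{p_\mr{t}}^\intercal~\zero_{p_\mr{q}}^\intercal]^\intercal$, and the block-diagonal structure of $G_{\mr{A}_\theta}$ and $G_{\mr{B}_\theta}$ is what makes temporal rows select temporal rows of $C+D$.
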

\begin{proof}
    From Proposition \ref{propC}, we have 
    $(C+D)\cdot s = s$. The product of the footprint matrices, $G_{\mr{A}_{\theta}}\cdot G_{\mr{B}_{\theta}}$  is a selection matrix with exactly one $1$ in each row. Hence, the matrices $M_1$ and $M_2$ consist of copies of rows of the matrices $D$ and $C$. So, we have
    \begin{align}
        (M_1+M_2)\cdot s &= s\notag\\
        M_2\cdot s &= s-M_1\cdot s \label{Ms}\\
        M_\theta\cdot s &= X\cdot M_2\cdot s,\notag
    \end{align}
    where $X = I-M_1$. Then, from \eqref{Ms}
    \begin{align}
        M_\theta\cdot s&=X\cdot(s-M_1\cdot s).
        \label{first}
    \end{align}
    Now we have 
    \begin{align}
        X\cdot (I-M_1)^{-1}&= I\notag\\
        X&= I+X\cdot M_1\notag\\
        X\cdot s&= s+X\cdot M_1\cdot s.
        \label{second}
    \end{align}
    Therefore, from \eqref{first} and \eqref{second},
    \begin{align*}
        X\cdot s-X\cdot M_1\cdot s = s = M_\theta\cdot s.
    \end{align*}
\end{proof}
So, the matrix $M_{\theta}$ will have at least a multiplicative eigenvalue equal to $1$ with eigenvector $v_1\!=\![\one_{n_{\mr{t}}}^{\intercal} ~ \zero_{n_{\mr{q}}}^{\intercal}]^{\intercal}$,
i.e., $M_{\theta}\cdot v_1\!=\! 1\cdot v_1$.

The local max-plus bounded buffer stability of an MMPS system \eqref{linearsysimp} can be assessed using the stability criteria of discrete-time systems \citep{hespanha2018linear} in the conventional algebra, as stated in the following Proposition.
    \begin{proposition}
    \label{propstability}
        The linearized system \eqref{linearsysimp} for ${\theta} \in \{1,\hdots,S\}$ is 
        \begin{itemize}
            \item max-plus bounded buffer stable if the system matrix $M_{\theta}$ has multiplicative eigenvalues less than or equal to one and the multiplicative eigenvectors corresponding to the multiplicative eigenvalues with magnitude one are independent. 
            \item  unstable if it has at least one multiplicative eigenvalue greater than one or the multiplicative eigenvectors associated with the multiplicative eigenvalues with magnitude one are not independent.
        \end{itemize}
    \end{proposition}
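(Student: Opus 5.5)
The plan is to reduce max-plus bounded buffer stability of \eqref{linearsysimp} to boundedness of a conventional linear iteration, and then invoke the standard discrete-time criterion \citep{hespanha2018linear}. Along the solutions of \eqref{linearsysimp}, $\Tx_\theta(k)=M_\theta^k\cdot\Tx_\theta(0)$. By \eqref{norm_orig_dependency}, the original state is $x(k)=\Tx_\theta(k)+(k\lambda_\theta)s+x_\mr{e}$. The term $(k\lambda_\theta)s$ shifts all temporal components equally and leaves the quantity components unchanged. Under the reading of the Hilbert projective norm used in this paper (temporal states share a common growth rate, quantity states stay bounded), this term does not affect boundedness, and neither does the constant $x_\mr{e}$. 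So it suffices to show that $\Tx_\theta(k)$ is bounded modulo the direction $s$ when the spectral conditions hold, and unbounded otherwise. A vector bounded in the ordinary sense is automatically bounded in $\pnorm{\cdot}$, since $\pnorm{v}\le 2\max_i|v_i|$.

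For the first item, I will take the Jordan decomposition $M_\theta=P J P^{-1}$, reading ``eigenvalues less than or equal to one'' as modulus at most one. Eigenvalues of modulus strictly less than one give Jordan blocks $J_i^k\to 0$. Eigenvalues of modulus one whose eigenvectors are independent (semisimple, i.e.\ $1\times 1$ Jordan blocks) give blocks $J_i^k$ of modulus one, which stay bounded. Hence $\sup_k\|M_\theta^k\|<\infty$, so $\|\Tx_\theta(k)\|\le c\|\Tx_\theta(0)\|$ and $\pnorm{\Tx_\theta(k)}\le m(\Tx_\theta(0))$. The eigenvalue $1$ with eigenvector $s$, established in the preceding proposition, is exactly the semisimple unit mode. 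It is harmless, and even invisible to $\pnorm{\cdot}$ for the temporal block.

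For the second item, I take either an eigenvalue $|\mu|>1$ with eigenvector $w$, or a nontrivial Jordan block of modulus one with generalized eigenvector $w$. Starting from $\Tx_\theta(0)=\epsilon w$, the iterate grows like $|\mu|^k$ in the first case and like $k$ in the second. I then need to show that this growth is not along $s$, which is the only direction killed by the seminorm. In the first case, $w$ is not parallel to $s$ because $s$ has eigenvalue $1$. In the second case, the growing component is an eigenvector direction times $k$. If that eigenvector were $s$, I would use the quantity components or compare against the chain, and argue that the generalized eigenvector cannot be removed by $s$-shifts.

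The main obstacle is locality. The linear model is valid only while $\Tx_\theta(k)\in\Omega_\theta$ (Proposition \ref{prop:linsys}, Proposition \ref{propomegatheta}), and an unstable trajectory may leave $\Omega_\theta$. Stability is therefore local. For the first item, I will restrict to initial conditions whose orbit stays in $\Omega_\theta$, using that $\Omega_\theta$ is a polyhedron containing the fixed point and unbounded along $s$. For the second item, instability means departure from the fixed point within $\Omega_\theta$, i.e.\ the linear iterate eventually exits any bounded neighbourhood. I also expect the case where the degenerate Jordan chain is attached to $s$ to need extra care.
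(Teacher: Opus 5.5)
Your route is the paper's. The paper gives no argument beyond importing the conventional discrete-time criterion of \citep{hespanha2018linear} and sketching an eigenvector expansion that tacitly assumes $M_\theta$ is diagonalizable. Your Jordan-form argument for the first item is a correct and more careful version of this. So is your handling, in the second item, of $|\mu|>1$ (for complex $\mu$, use the real two-dimensional invariant subspace, which meets $\mathrm{span}\{s\}$ trivially) and of defective unit eigenvalues whose eigenvector is not parallel to $s$. Your reading of $\pnorm{\cdot}$ as blind to $s$ is indeed needed: the literal seminorm annihilates $\one$, not $s$, so $\pnorm{x(k)}$ diverges whenever $n_\mr{q}>0$ and $\lambda_\theta\neq 0$.

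The gap is the case you postpone: a Jordan chain at eigenvalue $1$ ending in $s$, i.e.\ $(M_\theta-I)w=s$. Then $M_\theta^k w=w+ks$, so all growth is along $s$. Your seminorm ignores that direction, and the quantity components cannot detect it because $s$ has zero quantity part. No argument can show that this growth ``cannot be removed by $s$-shifts'', because the second item is false here.

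Take $n_\mr{t}=2$, $n_\mr{q}=0$, $A=\text{d}_{\otimes}(\zero)$, $B=\text{d}_{\otimes'}(\zero)$, $D=0$ and $C=\begin{bmatrix}2&-1\\1&0\end{bmatrix}$.
\begin{itemize}
\item All assumptions hold: the rows of $C$ sum to one, and $F=0$.
\item Every $\lambda$ is a growth rate, with fixed point $[\lambda\;\,0]^\intercal$.
\item The normalized matrices are again $\text{d}_{\otimes}(\zero)$ and $\text{d}_{\otimes'}(\zero)$, so $\Omega_\theta=\R^2$ and $M_\theta=C$.
\item The eigenvalue $1$ is double with the single eigenvector $\one=s$, so the second item declares the system unstable.
\item But $C\cdot[1\;\,0]^\intercal=[1\;\,0]^\intercal+\one$ gives $\pnorm{C^k x_0}=\pnorm{x_0}$ for all $x_0$ and $k$. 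This is max-plus bounded buffer stability even in the literal sense of the definition.
\end{itemize}
Dynamically, such a chain is a perturbation that only shifts the growth rate: $x(k)=x_\mr{e}+\epsilon w+k(\lambda+\epsilon)s$.

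To get a true second item you have two options.
\begin{itemize}
\item Read ``unstable'' as conventional instability of $M_\theta^k\Tx_\theta(0)$. Then \citep{hespanha2018linear} applies verbatim, and the seminorm plays no role.
\item Apply the spectral test to the map induced by $M_\theta$ on $\R^n/\mathrm{span}\{s\}$, which is well defined since $M_\theta s=s$. Bounded buffer stability is exactly power-boundedness of this quotient map. This test can differ from the test on $M_\theta$ only when $s$ lies in the range of $M_\theta-I$.
\end{itemize}
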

    Let $\Tx_0\in \Omega_\theta$ be an initial condition for a stable linearized system \eqref{linearsysimp}, $\mu_i$ be the set of multiplicative eigenvalues and $v_i$ be the eigenvectors of the linear system matrix $M_\theta$. Then,
    \begin{align*}
    x(0)&= \sigma_1v_1+\sigma_2v_2+\cdots+\sigma_nv_n,\;\;\sigma_i\in \R\\
        x(1)=M\cdot \Tx_0 &= \sigma_1\mu_1v_1+\sigma_2\mu_2 v_2+\cdots+\sigma_n\mu_n v_n.
    \end{align*}
    Without loss of generality, we can take $\mu_1 = 1$ and $v_1 = s$. Then, 
    \begin{align*}
         x(1) = M\cdot \Tx_0 &= \sigma_1s+\sigma_2\mu_2 v_2+\cdots+\sigma_n\mu_n v_n.\\
    \end{align*}
    If all $|\mu_i|<1,~\forall i>1$, as $k\to\infty$, we have 
    \begin{align*}
        x(k) = \sigma_1s.
    \end{align*}
    This means that the linear system \eqref{linearsysimp} is asymptotically stable and all the states converges to the vector $s$. When there are more than one multiplicative eigenvalues equal to $1$ (say $\mu_1, \mu_2$) with independent eigenvectors $s_1,\;s_2$ (where $s_1 = s$), then as $k\to\infty$
    \begin{align*}
        x(k)&= \sigma_1s_1+\sigma_2s_2.
    \end{align*}
    This shows that the states of the linear system \eqref{linearsysimp} can converge to the set of equilibrium points on a plane defined by the vectors $s_1, s_2$. This is directly related to the rank of the matrix $H_\text{eq}$.  
    \begin{corollary}
    \label{corrankeigrelation}
        The number of multiplicative eigenvalues of the matrix $M_\theta$ that are equal to $1$ for a stable linear system \eqref{linearsysimp} is equal to the rank deficiency of the matrix $H_\text{eq}$, i.e., $(n+m+p)-\text{rank}(H_\text{eq})$. This is equal to the dimension of the set of fixed points of the MMPS system \eqref{impsys} for a specific growth rate $\lambda_\theta$.
    \end{corollary}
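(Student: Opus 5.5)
The plan is to identify both quantities in Corollary~\ref{corrankeigrelation} with the dimension of one linear subspace: the directions along which fixed points of the normalized system can be shifted. Given stability, the eigenvalues of $M_\theta$ with magnitude one are semisimple and all others lie strictly inside the unit disc. By Proposition~\ref{propstability} and the discussion after it, the number of eigenvalues equal to $1$ (with multiplicity) is then $\dim\ker(M_\theta-I)$. I will therefore aim to prove
\[
\dim\ker(M_\theta - I) = (n+m+p)-\text{rank}(H_\text{eq}),
\]
by building a linear bijection between $\ker(M_\theta-I)$ and $\ker H_\text{eq}$.

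\textbf{Step 1: from $\ker H_\text{eq}$ to $\ker(M_\theta-I)$.} The equality constraints of \eqref{alg1imp}, evaluated at $\lambda^*$, are linear in $v=(x,y,w)$. A direction $\hat s=(\delta x,\delta y,\delta w)\in\ker H_\text{eq}$ satisfies three relations:
\[
\delta x_i=\delta y_{j} \text{ whenever } [G_{A_\theta}]_{ij}=1,\quad \delta y_j=\delta w_\ell \text{ whenever } [G_{B_\theta}]_{j\ell}=1,\quad \delta w=(C+D)\,\delta x.
\]
Each row of $G_{A_\theta}$ and $G_{B_\theta}$ has exactly one $1$. Composing the relations therefore gives
\[
\delta x = G_{A_\theta}G_{B_\theta}(C+D)\,\delta x = (M_1+M_2)\,\delta x .
\]
This is equivalent to $(I-M_1)\delta x=M_2\delta x$, i.e.\ $M_\theta\delta x=\delta x$ (Proposition~\ref{prop:linsys}; $I-M_1$ is invertible by the preceding proposition). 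Hence the projection $\hat s\mapsto \delta x$ sends $\ker H_\text{eq}$ into $\ker(M_\theta-I)$. It is injective, because $\delta w$ is fixed by $\delta x$ and $\delta y=G_{B_\theta}\delta w$ is in turn fixed by $\delta w$. One caveat: if some $y_j$ appears in no equality of $G_{A_\theta}$, then $\delta y_j$ is free. I would handle this by showing that such columns are exactly those already counted in the rank deficiency, or by restricting to footprint pairs in which every $y_j$ is selected. The exposition implicitly assumes the latter.

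\textbf{Step 2: the converse.} Given $\delta x$ with $M_\theta\delta x=\delta x$, set $\delta w=(C+D)\delta x$ and $\delta y=G_{B_\theta}\delta w$. The identity $(M_1+M_2)\delta x=\delta x$ then reads $G_{A_\theta}\delta y=\delta x$. This reverses Step~1, so $(\delta x,\delta y,\delta w)\in\ker H_\text{eq}$, and the two spaces are linearly isomorphic.

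\textbf{Step 3: the fixed-point set.} The fixed-point polyhedron $\mathcal V_{\lambda_\theta}$ is $v^*+\ker H_\text{eq}$ intersected with the inequality constraints. I need the inequalities to be non-binding in the relevant directions, so that the dimension is not lowered. First try: choose $v^*$ in the relative interior; if the inequalities hold strictly there, small moves along $\ker H_\text{eq}$ remain feasible and the dimension equals $\dim\ker H_\text{eq}$. This step is where I expect the main obstacle. Tight inequalities correspond to multiple zeros in a row (Remark~\ref{multizero}), and then the fixed point sits on a region boundary. I would treat that case either by switching to a footprint pair whose equality set absorbs the tight constraints, or by assuming the generic situation. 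Finally, $\dim$ is preserved under projection to the first $n$ components, by the injectivity shown in Step~1.
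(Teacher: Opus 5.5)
Your Steps 1 and 2 are correct. They supply what the paper never actually proves: the paper only argues informally that trajectories of a stable system \eqref{linearsysimp} settle on the span of the eigenvalue-$1$ eigenvectors, and then asserts that this is ``directly related'' to $\text{rank}(H_{\text{eq}})$. Your identification
\[
\ker H_{\text{eq}}\;\cong\;\ker\big(I-G_{\mr{A}_\theta}G_{\mr{B}_\theta}(C+D)\big)=\ker(I-M_1-M_2)=\ker(M_\theta-I),
\]
combined with semisimplicity of the eigenvalue $1$ under stability, is a real proof of the first equality. Your caveat about a free $\delta y_j$ is unnecessary. Every row of $G_{\mr{B}_\theta}$ contains a $1$, so each $\delta y_j$ is tied to $\delta w_{\ell(j)}$ whether or not $y_j$ is selected by $G_{\mr{A}_\theta}$. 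This is exactly what your injectivity argument uses, so no restriction on the footprint pair is needed.

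The genuine gap is Step 3, and it cannot be closed as stated, because the second equality is false without an additional nondegeneracy hypothesis. Here is a counterexample:
\begin{itemize}
\item Take two temporal states, $A$ with all four entries equal to $1$, $B$ with $0$ on the diagonal and $\top$ off it, $C=I_2$ and $D=0$. Then $x_i(k)=\max(x_1(k-1),x_2(k-1))+1$.
\item The growth rate is $1$, and the fixed points form the line $x_1=x_2$.
\item The footprint pair $G_{\mr{A}_\theta}=G_{\mr{B}_\theta}=I_2$ is feasible in \eqref{alg1imp} with $\lambda^*=1$. Its $6\times 6$ matrix $H_{\text{eq}}$ has rank $4$, and $M_\theta=I_2$ is stable in the sense of Proposition~\ref{propstability} with two eigenvalues equal to $1$.
\item However, the two off-footprint inequalities read $x_2\le x_1$ and $x_1\le x_2$. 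These are implicit equalities that cut the $2$-dimensional set $v^*+\ker H_{\text{eq}}$ down to a line.
\end{itemize}

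Neither of your fixes proves the statement for the given $\theta$. Switching to the pair with both $1$'s of $G_{\mr{A}_\theta}$ in column $1$ does make both counts equal to $1$, but it changes $M_\theta$ and $H_{\text{eq}}$. ``Assuming the generic situation'' is an added hypothesis, and it must be stated explicitly. One sufficient form: some fixed point for $\lambda_\theta$ satisfies every off-footprint inequality of \eqref{alg1imp} strictly, i.e.\ each row of $\tilde{A}_\theta$ and $\tilde{B}_\theta$ has a unique zero (the opposite of Remark~\ref{multizero}). A sharper form: no inequality that is nonconstant on $v^*+\ker H_{\text{eq}}$ is tight on the whole polyhedron.

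Under that hypothesis, your relative-interior argument plus the injectivity of $v\mapsto x$ on $\ker H_{\text{eq}}$ completes the proof. The paper tacitly relies on the same genericity. Its case study satisfies it, since $\sigma_2$ ranges over the nondegenerate interval $[-45,67.5]$.
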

\section{Case Study : An urban railway line} \label{sec:urban}
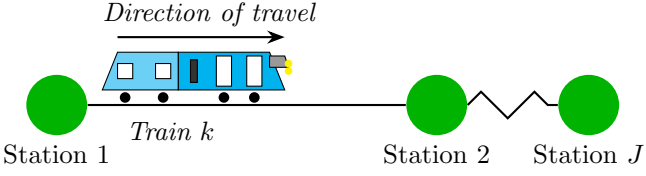
\begin{figure}
    \centering
    \begin{tikzpicture}[>=Stealth]

\node[circle, fill=green!70!black, minimum size=0.8cm, label=below:Station $1$] (S1) at (0,0) {};
\node[circle, fill=green!70!black, minimum size=0.8cm, label=below:Station $2$] (S2) at (5,0) {};
\node[circle, fill=green!70!black, minimum size=0.8cm, label=below:Station $J$] (SJ) at (7,0) {};

\draw[thick] (S1) -- (S2);
\draw[thick, decorate, decoration={zigzag, amplitude=5pt, segment length=20pt}] (S2) -- (SJ);

\begin{scope}[shift={(0.6,0.2)}]

  \draw[fill=cyan!50] (0,0) -- (0.2,0.5) -- (1,0.5)--(1,0)--cycle;

  \draw[fill=cyan!90] (1,0) -- (1,0.5) -- (2.2,0.5) -- (2.4,0) -- cycle;

  \foreach \x in {0.2,0.7} {
    \draw[fill=white] (\x,0.15) rectangle +(0.2,0.2);
  }

  \foreach \x in {1.5,1.9} {
    \draw[fill=white] (\x,0.05) rectangle +(0.2,0.4);
  }

  \draw[fill=black!80] (1.16,0.1) rectangle (1.25,0.4);
  
  \foreach \x in {0.3, 0.8, 1.6, 2} {
    \fill[black] (\x,-0.1) circle (0.07);
  }

  \draw[fill=black!40] (2.2,0.3) -- (2.2,0.45) -- (2.4,0.45) -- (2.45,0.3) -- cycle;

  \fill[yellow] (2.45,0.25) circle (0.05);
  \fill[yellow] (2.45,0.35) circle (0.05);
\end{scope}

\node[font=\itshape] at (2,1.2) {Direction of travel};
\draw[->, thick] (0.8,0.9) -- (3,0.9); 
\node[below] at (1.5,-0.1) {\textit{Train $k$}};
\end{tikzpicture}
\caption{An urban railway line} 
	\label{fig:rail}
\end{figure}
In this section, we present a case study of an urban railway line modelled as an implicit MMPS system. This example is taken from \citep{mmps}. 
Consider an urban railway line as given in Figure \ref{fig:rail} with $J$ station and $K$ trains. We assume there is no timetable. The trains indexed by $k\in\{1,\ldots,K\}$ depart from station 1 with a headway interval $\tau_0$. They stop at each station $j\in\{1,\ldots,J\}$, and depart from the station if all passengers have disembarked and boarded the train. We denote the arrival and departure time of the train $k$ at the station $j$ by $a_j(k)$ and $d_j(k)$, respectively. Let $\tau_\mr{d}$ be the dwell time of train $k$ at station 1. We denote the number of passengers in train $k$ when it leaves station $j$ by $\rho_j(k)$, and  
the number of passengers at station $j$ when train $k$ is leaving the station by $\sigma_j(k)$.   
We assume that all the trains have a limited capacity of $\rho_{\max}$ passengers. 
Also, consider the running times $\tau_{\mr{r},j}$ from station $j-1$ to station $j$ to be fixed.
Let $e_j$ denote the number of passengers entering the platform at station $j$ per time unit. Let $b$ be the number of passengers that can board the train per time unit, and let $f$ denote the number of passengers that can disembark the train per time unit. We assume that $b>e_j$ for all $j$ and that the number of passengers leaving train $k$ at a particular station $j$ is a fixed fraction $\beta_j$ of the number of the passengers in the train when it enters the station $j$.
The minimum headway between train $k$ and $k-1$ for all $k$ is given by $\tau_\mr{H}$. 

Finally, we define the parameters
\begin{align*}
    \mu_1 &= \frac{b}{b-e_j},\; \mu_2=\frac{b}{b-e_j}\frac{\beta_j}{f}, \;\mu_3= \frac{1}{b-e_j},\\
    \gamma_1 &= \frac{1}{b}\rho_{\max},\;\gamma_2 = \frac{\beta_j}{f}- \frac{1-\beta_j}{b}.
\end{align*}


The system equations for the urban railway line can now be derived. 
For $j>1$ and $k>0$ we obtain the following MMPS model \citep{mmps}: 
\begin{equation}\label{eq:urbanfinalA}
\begin{aligned}
	\begin{array}{rl} 
		a_j(k)			&= \max\Big( d_{j-1}(k) + \tau_{r,j}\>,\> d_j(k-1)+\tau_H \Big) \\
		d_j(k)       	&= \min\Big( \mu_1 a_j(k) + \mu_2 \rho_{j-1}(k) + \mu_3 \sigma_j(k-1)  +\\
		&  (1-\mu_1) d_j(k-1) \;,\;\gamma_1 + a_j(k) + \gamma_2 \rho_{j-1}(k) \Big)  \\[5mm]
		\rho_j(k)   	&= (1\!-\!\beta_j) \rho_{j\!-\!1}(k) + b\big(d_j(k) - a_j(k) -\frac{\beta_j}{f}\rho_{j\!-\!1}(k)\big)  \\
		\sigma_j(k)   	&= \sigma_j(k-1) + e_j \big(d_j(k)-d_j(k-1)\big)   \\
		&\hspace*{5mm}   - b\big(d_j(k) - a_j(k) -\frac{\beta_j}{f}\,\rho_{j-1}(k)\big). 
	\end{array}\hspace*{-11mm}{} 
 \end{aligned}
\end{equation}
For $j=1$ we get: 
\begin{align*} 
	a_1(k)      &= a_1(k-1) + \tau_0 & a_1(0)=0              \\
	d_1(k)      &= a_1(k)   +\tau_\mr{d}      & d_1(0)=\tau_\mr{d}         \\
	\rho_1(k)  &= \rho_1(k-1)           & \rho_1(0)=\bar{\rho}_1\leq \rho_{\max}\\
	\sigma_1(k) &=0 .
\end{align*}
We choose the states 
$x_{1,j}(k)=a_j(k)$, $x_{2,j}(k)=d_j(k)$, $x_{3,j}(k)=\rho_j(k)$, and $x_{4,j}(k)=\sigma_j(k)$. 
Now define the following matrices	for $j>1$: 
\[ \bar{A}_j= \matr{cccccc}{
               \tau_{\mr{r},j}  & \tau_\mr{H} & \eps   & \vline & \eps   & \eps \\
               \eps        & \eps   & 0      & \vline & \eps   & \eps \\
               \hline\\[-7mm]
                           &        &        & \vline &        &          \\
               \eps        & \eps   & \eps   & \vline & 0      & \eps \\               
               \eps        & \eps   & \eps   & \vline & \eps   & 0    }  \;,\;\;\;
   \bar{B}_j= \matr{ccccccc}{
	0    & \top & \top & \top     & \vline   & \top & \top \\
	\top & 0    & \top & \top     & \vline   & \top & \top \\
	\top & \top & 0    & \gamma_1 & \vline   & \top & \top \\
	\hline\\[-7mm]
	     &      &      &          & \vline   &      &          \\
	\top & \top & \top & \top     & \vline   & 0    & \top \\
	\top & \top & \top & \top     & \vline   & \top & 0    }  \]
\[	\bar{C}_{j} = \matr{cccccc}{
	0     & 0          & \vline & 0          & 0        \\
	0     & 1          & \vline & 0          & 0        \\
	0     & (1-\mu_1)  & \vline & 0          & \mu_3    \\
	0     & 0          & \vline & 0          & 0        \\
	\hline\\[-7mm]
	&       & \vline &                 &          \\
	0     & 0          & \vline & 0 & 0        \\
	0     & -e_j       & \vline & 0          & 1        }
\]
\[ \bar{D}_{j,\mr{c}} = \matr{cccccc}{
	0     & 0     & \vline & 0          & 0        \\
	0     & 0     & \vline & 0          & 0        \\
	\mu_1 & 0     & \vline & 0          & 0        \\
	1     & 0     & \vline & 0          & 0          \\
	\hline\\[-7mm]
	&       & \vline &            &          \\
	-b     & b     & \vline & 0          & 0         \\
	b     & e_j-b & \vline & 0          & 0        }  \;,\;\;\; 
\bar{D}_{j,\mr{p}} = \matr{cccccc}{
	0     & 1     & \vline & 0           & 0         \\
	0     & 0     & \vline & 0           & 0        \\
	0     & 0     & \vline & \mu_2       & 0        \\
	0     & 0     & \vline & \gamma_2    & 0          \\
	\hline\\[-7mm]
	&       & \vline &             &          \\
	0     & 0     & \vline & -b\gamma_2 & 0        \\
	0     & 0     & \vline & b\beta_j/f  & 0        },\]
where the matrix $D_{j,\mr{c}}$ is associated with the states at the current station $j$ and the matrix $D_{j,\mr{p}}$ is associated with states in the previous station $j-1$. For the starting station, i.e., $j=1$, we define
\[ \bar{A}_1= \matr{cccccc}{
	0           & \eps   & \eps   & \vline & \eps   & \eps \\
	\eps        & \eps   & 0      & \vline & \eps   & \eps \\
	\hline\\[-7mm]
	            &        &        & \vline &        &          \\
	\eps        & \eps   & \eps   & \vline & 0      & \eps \\               
	\eps        & \eps   & \eps   & \vline & \eps   & 0    }  \;,\;\;\;					
   \bar{B}_1= \matr{ccccccc}{
	\tau_0 & \top & \top               & \top     & \vline   & \top & \top \\
	\top       & 0    & \top               & \top     & \vline   & \top & \top \\
	\top       & \top & \tau_\mr{d}             & \top     & \vline   & \top & \top \\
	\hline\\[-7mm]
	           &      &                    &          & \vline   &      &          \\
	\top       & \top & \top               & \top     & \vline   & 0    & \top \\
	\top       & \top & \top               & \top     & \vline   & \top & 0    }  \]
\[\bar{C}_{1} = \matr{cccccc}{
	1     & 0     & \vline & 0          & 0        \\
	1     & 0     & \vline & 0          & 0        \\
	0     & 0     & \vline & 0          & 0        \\
	0     & 0     & \vline & 0          & 0        \\
	\hline\\[-7mm]
	      &       & \vline &            &          \\
	0     & 0     & \vline & 1          & 0        \\
	0     & 0     & \vline & 0          & 0        }
\;,\;\;\;
\bar{D}_{1,\mr{c}} = \matr{cccccc}{
	0     & 0     & \vline & 0          & 0        \\
	0     & 0     & \vline & 0          & 0        \\
	1     & 0     & \vline & 0          & 0        \\
	1     & 0     & \vline & 0          & 0        \\
	\hline\\[-7mm]
	      &       & \vline &            &          \\
	0     & 0     & \vline & 0          & 0        \\
	0     & 0     & \vline & 0          & 0        }  .
\]
By defining 
\begin{align} 
\begin{aligned}
	 A&=  \matr{cccc}{
		\bar{A}_{1} & \eps        & \cdots & \eps        \\
		\eps        & \bar{A}_{2} &        & \vdots      \\
		\vdots      &             & \ddots & \eps        \\
	    \eps        & \cdots      & \eps   & \bar{A}_{J} }  \;,\;\;\;
	 B=		\matr{ccccc}{
		\bar{B}_{1} & \top        & \cdots & \top        \\
		\top        & \bar{B}_{2} &        & \vdots      \\
		\vdots      &             & \ddots & \top        \\
		\top        & \cdots      & \top   & \bar{B}_{J} },   			\\	
C&=	   \matr{ccccc}{   
	   \bar{C}_{1} & \mbox{\large$0$}           & \cdots  & \mbox{\large$0$}             \\
	   \mbox{\large$0$}            & \bar{C}_{2} &         & \vdots             \\
	   \vdots        &               & \ddots  & \mbox{\large$0$}        \\
	  \mbox{\large$0$}          & \cdots        & \mbox{\large$0$}      & \bar{C}_{J} } ,\;\;	D=	    \matr{ccccc}{
			\bar{D}_{1,\mr{c}} & \mbox{\large$0$}             & \cdots & \mbox{\large$0$}             \\
			\bar{D}_{2,\mr{p}} & \bar{D}_{2,\mr{c}} &        & \vdots        \\
			\mbox{\large$0$}             & \ddots        & \ddots & \mbox{\large$0$}             \\
			\mbox{\large$0$}            & \cdots        & \bar{D}_{J,\mr{p}} & \bar{D}_{J,\mr{c}} },
            \end{aligned}
            \label{impliciturs}
\end{align} 
we obtain the implicit MMPS model \eqref{impsys}. 
To check if this implicit model is solvable, we first check the solvability condition in Theorem \ref{th:unique} using the structure matrices of $A, B$, and $D$. The structure matrices $S_\mr{A}, S_\mr{B}, S_\mr{D}$ are generated using \eqref{structuremat}. For ease of analysis, consider an urban railway line with 2 stations. Then, the structure matrices for this system have the following form:
\begin{align*}
    S_\mr{A} = \begin{bmatrix}
        S_\mr{\bar{A}_1}&\mbox{\large$0$}\\
        \mbox{\large$0$}& S_{\mr{\bar{A}}_j}
    \end{bmatrix},\; S_\mr{B} = \begin{bmatrix}
        S_\mr{\bar{B}_1}&\mbox{\large$0$}\\
        \mbox{\large$0$}& S_{\mr{\bar{B}}_j}
    \end{bmatrix},\; S_\mr{D} = \begin{bmatrix}
        S_\mr{\bar{D}_{1,c}}&\mbox{\large$0$}\\
        S_{\mr{\bar{D}}_{j,\mr{p}}}& S_{\mr{\bar{D}}_{j,\mr{c}}}
    \end{bmatrix}.
\end{align*}
 The structure matrices $S_\mr{\bar{A}_1}, S_{\mr{\bar{A}}_j}, S_\mr{\bar{B}_1}, S_{\mr{\bar{B}}_j}$ are obtained by replacing all the finite elements in the corresponding system matrices with $1$ and others with $0$ whereas $S_\mr{\bar{D}_{1,c}}, S_{\mr{\bar{D}}_{j,\mr{p}}}, S_{\mr{\bar{D}}_{j,\mr{c}}}$ is obtained by replacing all the non-zero elements in $\bar{D}_{j,\mr{p}}, \bar{D}_{j,\mr{c}}$ with $1$ and others with $0$ as in \eqref{structuremat}. Consider the matrix product $S = S_\mr{A}\cdot S_\mr{B}\cdot S_\mr{D}$,
 \begin{align*}
     S = \begin{bmatrix}
         S_\mr{\bar{A}_1}\cdot S_\mr{\bar{B}_1}\cdot S_\mr{\bar{D}_{1,c}}&\mbox{\large$0$}\\
         S_{\mr{\bar{A}}_j}\cdot S_{\mr{\bar{B}}_j}\cdot S_{\mr{\bar{D}}_{j,\mr{p}}}&
         S_{\mr{\bar{A}}_j}\cdot S_{\mr{\bar{B}}_j}\cdot S_{\mr{\bar{D}}_{j,\mr{c}}}
     \end{bmatrix}.
 \end{align*}
 The matrix products $S_\mr{\bar{A}_1}\cdot S_\mr{\bar{B}_1}\cdot S_\mr{\bar{D}_{1,c}}$ and $S_{\mr{\bar{A}}_j}\cdot S_{\mr{\bar{B}}_j}\cdot S_{\mr{\bar{D}}_{j,\mr{c}}}$ are strictly lower triangular. So, without loss of generality, we can conclude that $S$ is always strictly lower triangular for any number of stations. Therefore, the implicit model of the urban railway line is solvable according to Theorem \ref{th:unique}. The permutation matrix $T$ in this case is the identity matrix $I_n$.
 
Now, consider the following chosen parameters for an example urban railway network: 
\begin{align*}
	J         &=4,		           & \rho_{\max}   &=150,						& \beta_j 	&= 0.5, 			\\
	\tau_0&=120 ,               & \tau_\mr{r}		   &=120,						& \tau_\mr{H}	&= 30 , 			& \tau_\mr{d}		   &=60	,					\\		
	b         &= 2,		           & e_j             &= 0.5  ,						& f 		&= 2  . 			
\end{align*}
With these parameters, we get
\begin{align*} 				
	\mu_1	 &= 4/3,    	& \mu_2		&= 1/3,  	& \mu_3		&= 2/3,\\ 
	\gamma_1 &= 75, 		& \gamma_2 	&= 0. 
\end{align*}
We first calculate global parameters such as the growth rate and the fixed points of the urban railway system using LPPs (20). As mentioned in Remark 1, the number of LPPs that have to be solved for an urban railway line is dependent on the number of finite terms in each row of the $A$ and $B$ matrices. For the first station, there is only one finite element in each row of $\bar{A}_1$ and $\bar{B}_1$ and hence only one possible pair $(G_\mr{\bar{A}_1}, G_\mr{\bar{B}_1})$ of footprint matrices. Each of the corresponding stations,  $\bar{A}_j$ and $\bar{B}_j$, together give rise to $2\times2 = 4$ possible pairs of footprint matrices $G_{\mr{\bar{A}}_j}, G_{\mr{\bar{B}}_j}$ according to Remark \ref{lppred}. The footprint matrices of an urban railway line with four stations ($J=4$) have the following form:
  \begin{align*}
      G_\mr{A} = \begin{bmatrix}
          G_\mr{\bar{A}_1}&\cdots&\mbox{\large$0$}\\
          \vdots&\ddots\\
        \mbox{\large$0$}&\cdots& G_\mr{\bar{A}_4}
      \end{bmatrix},\; G_\mr{B} = \begin{bmatrix}
          G_\mr{\bar{B}_1}&\cdots&\mbox{\large$0$}\\
          \vdots&\ddots\\
        \mbox{\large$0$}&\cdots& G_\mr{\bar{B}_4}
      \end{bmatrix}.
  \end{align*}
  Hence, there are $1\times 4\times 4\times 4=64$ total different footprint matrix pairs that correspond to 64 different LPPs to be solved for this example. These LPPs (20) were solved with MATLAB R2024b on an Intel Core i7-1365U processor, and it took $7.45 \sec$  to solve all of them. For $n$ stations, there will be $4^{n-1}$ LPPs that should be solved to find all the growth rates of the system.  However, out of 64 LPPs, only one pair of footprint matrices yields a growth rate and all the other 63 LPPs are infeasible. So this algorithm is conservative and further research is going on to reduce the number of LPPs that give a feasible solution.
  
  The values of the growth rate and one of the fixed points for the urban railway line with $4$ stations are,
\begin{align}
\lambda&= 120,\\
x_{\mr{e_1}}\!& =\! \begin{array}{cccccccccccccccc}
   \big[0&  60 &   120&  0& 180& 240& 120&   0\hdots\\ \hdots360 & 420 &  120&   0 & 540  & 600 &   120&    0\big]^\intercal.
\end{array}
\label{URSpara}
\end{align}
Note that we provide only $x_\mr{e}$ of the fixed point $v = [x_\mr{e}^\intercal\;y_\mr{e}^\intercal\;w_\mr{e}^\intercal]^\intercal$ here, as the full vector $v$ is of size $60 \times 1\;(n+m+p = 60)$. The fixed point vector $x_\mr{e}$ is the equilibrium point of the state vector $[x_{1,j}(k)\;x_{2,j}(k)\;x_{3,j}(k)\;x_{4,j}(k)]^\intercal$ for all stations $j \in\{1,2,3,4\}$.
However, as mentioned before, there are more fixed points that satisfy the equality and inequality constraints. In this case, we get a plane of fixed points as the equality constraint matrix lacks the full rank by 2, i.e., rank($H_{\text{eq}}$) = 58 (the size of $H_\text{eq}$ is $60\times 60$, $n+m+p=60$). In order to find all the vectors that qualify as a fixed point of the system, we solve the following system of equality and inequality constraints for $\lambda=120$
\begin{align}
    H_{\text{eq}}\cdot v = h_{\text{eq}},\quad H_{\text{ineq}}\cdot v \leq h_{\text{ineq}}.
    \label{fixed pointfeasiblity2}
\end{align}
As the $H_\mr{eq}$ is of size $60\times 60$ and $H_\mr{ineq}$ is of size $6\times 60$ for the urban railway system, we do not present their value in this paper.
By using Gaussian elimination \citep{lay2003linear}, we find the following system of equations for $x\in\R^n$ from \eqref{fixed pointfeasiblity2}:
\begin{align}
   x \;&= x_\mathrm{p} +\sigma_1s_1+\sigma_2s_2\label{fixed pointset}\\
   x_\mr{p} &= \begin{array}{cccccccccccccccc}
\big[-180&
  -120&
  -840&
     0&
     0&
  -180&
  -360&
     0\hdots\\ \hdots
   -60&
  -120&
  -120&
     0&
     0&
     0&
     0&
     0\big]^\intercal
   \end{array}\notag\\
   s_1 &=\big[\begin{array}{cccccccccccccccc}
       1&
    1&
     0&
     0&
    1&
    1&
     0&
     0&
    1&
    1&
     0&
     0&
    1&
    1&
     0&
     0
   \end{array}\big]^\intercal\notag\\
   s_2 &=\big[\begin{array}{cccccccccccccccc}
    3&
    3&
   -8&
    0&
    3&
    1&
   -4&
    0&
    1&
    0&
   -2&
    0&
    0&
   -0.5&
   -1&
    0
   \end{array}\big]^\intercal\notag
\end{align}
where $\sigma_1,\sigma_2\in\R$ are the free variables.
\begin{remark}
    The direction vector $s_1$ is different from $s = [\one_{n_\mr{t}}^\intercal\;\zero_{n_\mr{q}}^\intercal]^\intercal$  as the model is represented as two temporal states and two quantity states per station $j$ rather than arranging all the temporal states and quantity states across the stations together.
\end{remark}
The fixed point set is the set of all $x$ of the form \eqref{fixed pointset} that satisfies the inequality 
$$H_{\text{ineq}}\cdot v\leq h_{\text{ineq}},$$
where $v = [x^\intercal\; y^\intercal\; w^\intercal]^\intercal$. Therefore, 
\begin{align}
    H_{\text{ineq}}\cdot v_\mathrm{p}+\sigma_1 H_{\text{ineq}}\cdot b_1+\sigma_2 H_{\text{ineq}}\cdot b_2 \leq h_{\text{ineq}}.
    \label{fixed pointspaceineq}
\end{align}

where the vector $v_\mr{p}=[x_\mr{p}^\intercal\; y_\mr{p}^\intercal\; w_\mr{p}^\intercal]^\intercal$ is the extended fixed point, and vectors $b_1$, $b_2$ are the extended basis vectors.
The term $H_{\text{ineq}}\cdot b_1$ is $\zero$. This also shows that the system is shift-invariant in the direction of $s_1$ and agrees with the time-invariance property. Hence, the free variable $\sigma_1$ is unbounded. However, the free variable $\sigma_2$ is bounded. The bounds on $\sigma_2$ can be found using inequality \eqref{fixed pointspaceineq} and they are obtained as 
\begin{align}
-45\leq \sigma_2\leq67.5.
\label{sigma2range}
\end{align}
The fixed points denote the steady-state values of the states $a_j,d_j,\rho_j,\sigma_j$ for the stations $j = 1,2,3,4$.  Also, when the system is initialized at a fixed point, the system stays at a stationary regime with a constant growth rate $\lambda$, i.e., $x(k) = x_\mr{e}+k\lambda$. For example, consider the growth rate and fixed point in \eqref{URSpara}. The steady-state arrival and departure time of train `$k$' at subsequent stations can be obtained as:
\begin{center}
    \begin{tabular}{|c|c|c|c|c|}
\hline
   \text{Stations}, $j$ &$1$&$2$&$3$&$4$ \\ \hline
    \text{Arrival time}, $a_j$ &$0$&180&360&540\\\hline
  \text{Departure time}, $d_j$&60&240&420&600\\\hline
\end{tabular}
\end{center}

Therefore, if the train `$k$' arrives and departs through the stations, 1,2,3,4 at the time instants as specified above, the next train (train `$k+1$') arrives and departs at stations $1,2,3,4$, at time  $a_j+120$, and $d_j+120$, respectively. Hence, we get a running timetable for the urban railway network from the fixed points and growth rate. The existence of a set of fixed points help us to choose the most suitable operating point for the application. For example, based on \eqref{fixed pointset} and \eqref{sigma2range}, the vector
\begin{align*}
    x_\mr{e_2} = \begin{array}{cccccccccccccccc}
        \big[0&   60&         0&         0&  180&  210&   60&         0\hdots\\\hdots  330&  375&   90&         0&  495&  547.5& 105&         0\big].
    \end{array}
\end{align*}
is also a fixed point for the system. However, in this vector, the difference between the departure time and arrival time at every station $j$ is not uniform as is the case for vector $x_\mr{e_1}$. Hence, the vector $x_\mr{e_2}$ does not provide a uniform timetable for the urban railway system. 
\begin{figure*}[t]
    \centering
     \begin{align}
    M&= \resizebox{0.6\textwidth}{!}{$\left[\begin{array}{cccccccccccccccc}
    1& 0& 0& 0& 0& 0& 0& 0& 0& 0& 0& 0& 0& 0& 0& 0\\
    1& 0& 0& 0& 0& 0& 0& 0& 0& 0& 0& 0& 0& 0& 0& 0\\
    0& 0& 1& 0& 0& 0& 0& 0& 0& 0& 0& 0& 0& 0& 0& 0\\
    0& 0& 0& 0& 0& 0& 0& 0& 0& 0& 0& 0& 0& 0& 0& 0\\
    1& 0& 0& 0& 0& 0& 0& 0& 0& 0& 0& 0& 0& 0& 0& 0\\
    1.33& 0& 0.33& 0& 0&-0.33& 0& 0.67& 0& 0& 0& 0& 0& 0& 0& 0\\
    0.67& 0& 0.67& 0& 0&-0.67& 0&1.33&0& 0& 0& 0& 0& 0& 0& 0\\
    0& 0& 0& 0& 0& 0& 0& 0& 0& 0& 0& 0& 0& 0& 0& 0\\
    1.33& 0& 0.33& 0& 0&-0.33& 0& 0.67& 0& 0& 0& 0& 0& 0& 0& 0\\
    2& 0& 0.67& 0& 0& -0.67& 0& 1.33& 0& -0.33& 0& 0.67& 0& 0& 0& 0\\
    1.33& 0& 0.67& 0& 0& -0.67& 0& 1.33& 0& -0.67& 0& 1.33& 0& 0& 0& 0\\
    0& 0& 0& 0& 0& 0& 0& 0& 0& 0& 0& 0& 0& 0& 0& 0\\
    2& 0& 0.67& 0& 0& -0.67& 0& 1.33& 0& -0.33& 0& 0.67& 0& 0& 0& 0\\
    3.11& 0& 1.11& 0& 0& -1.11& 0& 2.22& 0& -0.67& 0& 1.33& 0& -0.33& 0& 0.67\\
    2.22& 0& 0.89& 0& 0& -0.89& 0& 1.78& 0& -0.67& 0& 1.33& 0&-0.67& 0& 1.33\\
    0& 0& 0& 0& 0& 0& 0& 0& 0& 0& 0& 0& 0& 0& 0& 0
    \end{array}\right]$} \label{matrixM}\\
        H &= \resizebox{0.6\textwidth}{!}{$\left[\begin{array}{cccccccccccccccc}
        0.33& 0& 0.33& 0& 0& -0.33& 0& 0.67& 0& 0& 0& 0& 0& 0& 0& 0\\  
        0.67& 0& 0.33& 0& 0& -0.33& 0& 0.67& 0& -0.33& 0& 0.67& 0& 0& 0& 0\\  
        1.11& 0& 0.44& 0& 0& -0.44& 0& 0.89& 0& -0.33& 0& 0.67& 0& -0.33& 0& 0.67\\  
        -1& 0& 0& 0& 0& 1& 0& 0& 0& 0& 0& 0& 0& 0& 0& 0\\
        -1.33& 0& -0.33& 0& 0& 0.33& 0& -0.67& 0& 1& 0& 0& 0& 0& 0& 0\\  
         -2& 0& -0.67& 0& 0& 0.67& 0& -1.33& 0& 0.33& 0& -0.67& 0& 1& 0& 0 
        \end{array}\right]$},\; h = \begin{bmatrix}
   15\\
   15\\
  15\\
   30\\
   30\\
   30
        \end{bmatrix}
        \label{polyurban}
    \end{align}
\end{figure*}
Further, to assess the stability of the system operating at a fixed point, we first derive an equivalent linearized system around the fixed point $x_\mr{e_1}$. This is done by first getting a normalized system using \eqref{elaboration_norm} and then linearizing it using \eqref{linearsysimp}.  The linearized system is obtained as 
\begin{align}
    \tilde{x}(k) = M\cdot \tilde{x}(k-1), \;\Tx \in \Omega
    \label{linsysurban}
\end{align}
where $M$ is given in \eqref{matrixM}, the region $\Omega: H\cdot \Tx \leq h$, and $H$, $h$ are given in \eqref{polyurban}. 

The matrix $M$ has two multiplicative eigenvalues equal to $1$ with independent eigenvectors, and all the other multiplicative eigenvalues are less than one. Therefore, by using Proposition \ref{propstability}, we find that the system \eqref{linsysurban} is stable. Note that the rank deficiency of the matrix $H_\mr{eq}$ (=2) and the number of multiplicative eigenvalues of the matrix $M$ that are equal to $1$ are the same in accordance with the Corollary \ref{corrankeigrelation}.
\section{Conclusions}
\label{sec7}
In this paper, we have presented a set of linear programming problems to calculate the two most important parameters, the growth rates, and equilibrium points of an implicit MMPS system. Using these parameters, we have derived a normalized MMPS system with structural properties, which makes it easier to analyze them. Then, we have derived a linear system around the fixed points of the normalized system to study the local stability of an implicit autonomous MMPS system. These results will be useful when building a closed-loop controller for the system.

In the future, we will extend our work to investigate closed-loop control problems for discrete-event systems modeled as MMPS systems. Feedback control will be implemented using control Lyapunov functions.   
\bibliography{ifacconf}             

\end{document}